\documentclass[format=sigconf]{acmart}
% 
% \usepackage{booktabs} % For formal tables
% 
% 
% Copyright
%\setcopyright{none}
%\setcopyright{acmcopyright}
%\setcopyright{acmlicensed}
\setcopyright{rightsretained}
%\setcopyright{usgov}
%\setcopyright{usgovmixed}
%\setcopyright{cagov}
%\setcopyright{cagovmixed}

% % DOI
\acmDOI{}

% % ISBN
\acmISBN{}

%Conference
\acmConference[Conference'19]{Conference'19, July 2019, Washington, DC, USA}
\acmYear{2018}
\copyrightyear{}

\acmArticle{}
\acmPrice{}

\fancyhead{}

\usepackage[utf8]{inputenc}
\usepackage[T1]{fontenc}
\usepackage{amsmath,amssymb,amsthm}
\usepackage{xspace}
%\usepackage{amsmath}
%\usepackage{setspace}

% For precneq, and all symbols for compatibility
\DeclareFontFamily{U}{mathb}{\hyphenchar\font45}
\DeclareFontShape{U}{mathb}{m}{n}{ <-6> mathb5 <6-7> mathb6 <7-8>
  mathb7 <8-9> mathb8 <9-10> mathb9 <10-12> mathb10 <12-> mathb12 }{}
\DeclareSymbolFont{mathb}{U}{mathb}{m}{n}
\DeclareMathSymbol{\prec}{\mathrel}{mathb}{"A0}
\DeclareMathSymbol{\succ}{\mathrel}{mathb}{"A1}
\DeclareMathSymbol{\preceq}{\mathrel}{mathb}{"A8}
\DeclareMathSymbol{\succeq}{\mathrel}{mathb}{"A9}
\DeclareMathSymbol{\precneq}{\mathrel}{mathb}{"AC}
\DeclareMathSymbol{\succneq}{\mathrel}{mathb}{"AD}

\usepackage{color}
\definecolor{gray}{gray}{0.4}

\usepackage{amscd}
\usepackage{natbib}
\usepackage{enumerate}
\usepackage{algorithmic,algorithm}
\usepackage[english]{babel}
\usepackage[shortlabels]{enumitem} 

\usepackage{fixme}
\fxsetup{author={T},envlayout={color},%
  layout={footnote,marginclue},theme=color}

\FXRegisterAuthor{mf}{amf}{M}
% Defines commands \mfnote{...} \mfwarning{...} \mferror{...} \mffatal{...}
% and environments amfnote{} etc (environments with one mandatory argument)

\usepackage[mathscr]{euscript}

\usepackage{array}

\usepackage{mathtools}
\mathtoolsset{showonlyrefs,showmanualtags}

\newtheorem{theorem}{Theorem}[section]
\newtheorem{lemma}[theorem]{Lemma}
\newtheorem{proposition}[theorem]{Proposition}
\newtheorem{corollary}[theorem]{Corollary}
\newtheorem{definition}[theorem]{Definition}

\newtheorem{remark}[theorem]{Remark}

\newcommand{\LM}{\mathrm{LM}}%for leading term
%leading term including the
%default notation of term order
\newcommand{\LT}{\mathrm{LT}}%for leading term
%leading term including the
%default notation of term order
%default notation of
%polynomial ring

\newcommand{\ZZ}{\mathbb{Z}}
\newcommand{\KK}{\mathbb{K}}
\newcommand{\NN}{\mathbb{N}}

%\renewcommand{\footnote}[1]{}

%for leading term
%for leading term

\newcommand{\divides}{\mid}
\newcommand{\LC}{\mathrm{LC}}%for leading coefficient

\newcommand{\lcm}{\mathrm{lcm}}

\newcommand{\sth}{\mathord{\color{black!57}\bullet}\;\!}
%\newcommand{\sth}{\ast}

%=====================================
%% Some pretty etc.'s, etc...
\newcommand{\bfu}{\mathbf{u}}
\newcommand{\bfe}{\mathbf{e}}
\newcommand{\bfp}{\mathbf{p}}

\newcommand{\bfT}{\mathbf{T}}

\newcommand{\bfalpha}{\boldsymbol{\alpha}}
\newcommand{\bfbeta}{\boldsymbol{\beta}}

\newcommand{\bfq}{\mathbf{q}}

\newcommand{\bfeps}{\mathbf{\epsilon}}

\newcommand{\modpol}[1]{\overline{#1}}
\newcommand{\bbfp}{\modpol{\bfp}}

\newcommand{\bbfe}{\modpol{\bfe}}

\newcommand{\sig}{\mathfrak{s}}
\newcommand{\Sig}{S}

\newcommand{\sreduc}{$\sig$-reducible\xspace}

\newcommand{\SPol}{S\textup{-Pol}}
\newcommand{\GPol}{G\textup{-Pol}}

%=====================================
%Residue Class Ring is the same as Quotient Ring and Factor Ring.

\newcommand{\resp}{resp.\xspace}

\newcommand{\ie}{\emph{i.e.}\xspace}

\newcommand{\Syz}{\mathrm{Syz}}
\newcommand{\TSyz}{\mathrm{TSyz}}

\newcommand{\Mon}{\mathrm{Mon}}
\newcommand{\Ter}{\mathrm{Ter}}

\newcommand{\RR}{R}
\renewcommand{\AA}{A}

\clubpenalty=10000 
\widowpenalty=10000

\begin{document}

\title{Signature-based Möller's algorithm\\
  for strong Gröbner bases over PIDs}

\author{Maria Francis}
\affiliation{
  \institution{Indian Institute of Technology Hyderabad}
  \city{Hyderabad, India}  
}
\email{mariaf@iith.ac.in}

\author{Thibaut Verron}
\affiliation{%Johannes Kepler Universität;
  \institution{Institute for Algebra / Johannes Kepler University}
  \city{Linz, Austria}  
}
\email{thibaut.verron@jku.at}

\thanks{This work was started when the first author was supported by the Austrian FWF grant Y464.
  The second author is supported by the Austrian FWF grant F5004.}



\begin{abstract}
  Signature-based algorithms are the latest and most efficient approach as of today to compute Gröbner bases for polynomial systems over fields.
  Recently, possible extensions of these techniques to general rings have attracted the attention of several authors.

  In this paper, we present a signature-based version of Möller's classical variant of Buchberger's algorithm for computing strong Gröbner bases over Principal Ideal Domains (or PIDs).
  It ensures that the signatures do not decrease during the algorithm, which makes it possible to apply classical signature criteria for further optimization.
  In particular, with the F5 criterion, the signature version of Möller's algorithm computes a Gröbner basis without reductions to zero for a polynomial system given by a regular sequence.
  We also show how Buchberger's chain criterion can be implemented so as to be compatible with the signatures.

  We prove correctness and termination of the algorithm.
  Furthermore, we have written a toy implementation in Magma, allowing us to quantitatively compare the efficiency of the various criteria for eliminating $S$-pairs.
\end{abstract}

% \begin{CCSXML}
%   <ccs2012>
%   <concept>
%   <concept_id>10010147.10010148.10010149.10010150</concept_id>
%   <concept_desc>Computing methodologies~Algebraic algorithms</concept_desc>
%   <concept_significance>500</concept_significance>
%   </concept>
%   </ccs2012>
% \end{CCSXML}

% \ccsdesc[500]{Computing methodologies~Algebraic algorithms}

%\vspace{-1mm}
%\ccsdesc[500]{Computing methodologies~Algebraic algorithms}
%\printccsdesc

\vspace{-1.5mm}
\terms{Algorithms, Theory}

\keywords{Algorithms, Gröbner bases, Signature-based algorithms, Polynomials over rings, Principal Ideal Domains}

\maketitle



\section{Introduction}
\label{sec:Introduction}

% Importance of Gröbner bases and also for rings
\paragraph{Motivation and main results}
Ever since Gröbner bases were introduced by Buchberger in 1965~\cite{Buchberger:2006:thesistranslation}, they have become a valuable tool for solving polynomial systems in many different applications, for example in cryptography or in engineering.
For many applications, restricting Gröbner basis computations to polynomials over a field is enough.
However, some applications require the computation of Gröbner bases over rings.
For instance, Gröbner bases over $\ZZ$ can be used in lattice-based cryptography~\cite{FrancisDukkipati:2014:Hash}, or as a multi-purpose tool in integer linear algebra~\cite{Lichtblau:applications}.

% Signatures for fields
In the case of polynomials over a field, many algorithms have been developed to make Gröbner basis computations more and more efficient.
The latest generation of Gröbner basis algorithms for fields is the class of signature-based algorithms.
They introduce signatures, which are defined as the leading terms of a module representation of polynomials in terms of the generators of the ideal.
This notion makes it possible to eliminate redundant computations and reductions of $S$-polynomials, by enforcing the key invariant that \emph{signatures always increase during the algorithm}.
With this information, algorithms are able to use criteria such as the F5 criterion~\cite{Faugere:2002:F5}, which allows to compute a Gröbner basis for an ideal given by a regular sequence without any reduction to zero.

Several algorithms have been developed for Gröbner bases over rings.
In~\cite{Moller:1988:grobnerrings2}, Möller sketched an algorithm for computing so-called weak Gröbner bases over general commutative rings (described in detail in \cite[Sec.4.2]{Adams:1994:introtogrobnerbasis}) and presented a specialized version, computing strong Gröbner bases over Principal Ideal Domains (PIDs).
In this paper, to avoid ambiguity, we call the former algorithm \emph{Möller's weak GB algorithm} and the latter  \emph{Möller's strong GB algorithm} (or \emph{Möller's algorithm} when clear from the context).

In this paper, we show how to add signatures to Möller's strong GB algorithm.
We prove that our signature-variant of the algorithm is able to compute a strong Gröbner basis of any polynomial ideal over a PID, and that the crucial invariant holds: the algorithm never encounters a signature smaller than that of a previously computed polynomial.

Möller's algorithm maintains a weak Gröbner basis $G_{w}$ and a strong Gröbner basis $G_{s}$.
The basis $G_{w}$ is obtained by reducing $S$-polynomials by elements of the strong basis; the basis $G_{s}$ is obtained by computing (but not reducing) $G$-polynomials (called $T$-polynomials in~\cite{Moller:1988:grobnerrings2}) of elements of the weak basis.

The signature version of Möller's algorithm maintains a signature of each element in $G_{w}$.
As for elements of $G_{s}$, requiring the computation of $G$-polynomials to maintain a matching signature is too restrictive.
However, we prove that maintaining an upper bound on their signature is sufficient to ensure that the signature of $S$-polynomials in $G_{w}$ does not drop when reduced by elements of $G_{s}$, and that the algorithm as a whole is correct.

Additional criteria can be implemented to further eliminate redundant $S$-polynomials, such as Buchberger's criteria~\cite{Buchberger:criteria}.
In particular, we show that Buchberger's chain criterion can be implemented in a similar fashion as Gebauer-Möller's criteria, with an order compatible with the selection strategy by smallest signature.
The fact that signatures do not drop implies that the algorithm is also compatible with additional criteria such as the singular criterion, the syzygy criterion or the F5 criterion.
We prove that the algorithm is correct and terminates.

We have written a toy implementation of Möller's algorithm with signatures\footnote{Available online: \url{https://github.com/ThibautVerron/SignatureMoller}} in the computer algebra system Magma~\cite{Magma}, and we use it to give experimental data on the number of computed and eliminated pairs for some systems.
We also discuss some optimizations which can be applied when implementing the algorithm.

\paragraph{Related work}

Signature-based algorithms for fields have a long history.
Early work in this direction was described in~\cite{MollerMoraTraverso1992}, where the authors use computations in a polynomial module for a similar purpose, and Algo.~F5~\cite{Faugere:2002:F5} showed that module computations can be avoided by considering only signatures.
From there, significant work has gone into studying signature-based algorithms from a theoretical standpoint and extending them.
An excellent survey of this is given~in~\cite{eder:2017:survey}.

% Signatures on rings, Eder
Several algorithms have been developed for Gröbner bases over rings.
Möller's work~\cite{Moller:1988:grobnerrings2}, on an algorithm for weak GBs over general rings and an algorithm for strong GBs over PIDs, was already mentioned.
It also gives a survey of precursor works regarding Gröbner bases over rings.
Similar ideas, notably $G$-polynomials, are present in different variations of Buchberger's algorithm for PIDs\cite{Pan:Dbases} or Euclidean domains~\cite{Lichtblau,Kandri-Rody-Kapur}.

Extending signature techniques to rings has been the focus of recent research, starting in 2017 with Eder and Popescu~\cite{Eder:2017:EuclideanRings}.
In that work, the authors consider a signature-based version of Gröbner basis algorithms for Euclidean domains. 
The authors showed with a counter-example that implementing \emph{totally ordered }signatures for rings cannot ensure that the crucial invariant holds.
However, their algorithm can detect signature drops and fall back to existing algorithms without signatures.
It can nonetheless serve as an efficient preprocessing step.

% Signatures on rings, us for Möller + main difference with Eder
In~\cite{FV2018}, we described a way to add signatures to Möller's weak GB algorithm, and proved that the resulting algorithm is correct and terminates over PIDs.
In particular, there is no signature drop in the algorithm, and additional criteria such as the F5 criterion can be used to eliminate reductions to zero in the case of a regular sequence.
The main difference with the approach of~\cite{Eder:2017:EuclideanRings} is that signatures are only partially ordered, and the coefficient parts of signatures are never compared.
% However, Möller's algorithm suffers from a significant combinatorial cost when computing $S$-polynomials, and cannot be used in practice for systems of even moderate size.

% This work: Möller technique for Buchberger with Gebauer-Möller criteria
In the present paper, we incorporate the same signature techniques into Möller's strong GB algorithm~\cite{Moller:1988:grobnerrings2}.

% The computation of saturated sets in Möller's weak GB algorithm takes time exponential in the size of the Gröbner basis, and Möller's strong GB algorithm replaces it with the computation of $S$-polynomials and $G$-polynomials, only involving pairs which can be computed in quadratic time.
% It uses pairs instead of saturated sets, and thus replaces a step with cost exponential in the size of the currently computed Gröbner basis with a step with quadratic cost.
The main ingredients for the proofs of correctness of the algorithm with signatures and criteria are the relation between regular weak $S$-polynomials and weak signature-Gröbner bases from~\cite{FV2018}, and the characterization of Gröbner bases in terms of syzygies of the leading terms, given by the Lifting Theorem~\cite[Th.~1]{Moller:1988:grobnerrings2}, which we generalize to a signature setting.

\section{Preliminaries}
\label{sec:Notat-prel}

\subsection{Notations}

Let $R$ be a principal ideal domain (PID), which is assumed to have a unit element and be commutative.
We assume that the ring $R$ is \emph{effective} in the sense that:
\begin{enumerate}
  \item there are algorithms for all arithmetic operations ($+$, $\ast$, comparison to zero and to one) in $R$;
  \item there is an algorithm which, given $a$ and $b \in R$, computes their greatest common divisor $d$ and the Bézout coefficients $u$ and $v$ such that $au + bv = d$;
  \item there is an algorithm which, given $a$ and $b \in R$, tests whether $a$ divides $b$ and if so, computes the quotient $b/a$.
\end{enumerate}

\begin{remark}
  Effective Euclidean rings (in the sense that there are algorithms for (1) and an algorithm for Euclidean division), thanks to the extended Euclid algorithm, are effective PIDs.
\end{remark}

Let $A = R[x_{1}, \dots,x_{n}]$ be the polynomial ring in $n$ indeterminates $x_{1},\dots,x_{n}$ over $R$.
A monomial in $A$ is $x^{a} := x_{1}^{a_{1}} \dots x_{n}^{a_{n}}$ where $a = (a_{1},\dots,a_{n}) \in \NN^{n}$.
A term in $A$ is $kx^{a}$, where $k \in R \setminus \{0\}$.
The set of terms (resp. monomials) of $A$ is denoted by $\Ter(A)$ (resp. $\Mon(A)$).

We use the notation $\mathfrak{a}$ for ideals in the polynomial algebra $A$ and $I$ for ideals in the coefficient ring $\RR$.

The notion of monomial order can be directly extended from $\KK[x_1, \ldots, x_n]$ to $\AA$.
In the rest of the paper, we assume that $\AA$ is endowed with an implicit monomial order $\prec$, and we define as usual the leading monomial $\LM$, the leading term $\LT$ and the leading coefficient $\LC$ of a given polynomial.

Given a tuple of polynomials $(g_{1},\dots,g_{s})$ and $i \in \{1,\dots,s\}$, we will frequently denote, for brevity, $M(i) = \LM(g_{i})$, $C(i) = \LC(g_{i})$ and $T(i) = \LT(g_{i}) = C(i) M(i)$.
Given $i,j \in \{1,\dots,s\}$, we will frequently denote $M(i,j) = \lcm(M(i),M(j))$, $T(i,j) = \lcm(T(i),T(j))$ and $C(i,j) = \lcm(C(i),C(j))$.

\subsection{Signatures}
\label{sec:Signatures}

We consider the free $\AA$-module $\AA^{m}$ with basis $\bfe_1, \ldots, \bfe_m$.
A term (\resp{} monomial) in $\AA^{m}$ is $kx^{a}\bfe_{i}$ (\resp{} $x^{a}\bfe_{i}$) for some $k \in \RR \setminus \{0\}$, $x^{a} \in \Mon(\AA)$, $i \in \{1,\dots,m\}$.
The set of terms of $\AA^{m}$ is denoted by $\Ter(A^{m})$.
In this paper, terms in $\AA^{m}$ are ordered using the Position Over Term (POT) order, defined by
\begin{displaymath}
  k x^a\bfe_i \prec lx^b\bfe_j \iff i \lneq j \text{ or } ( i = j \text{ and } x^a \prec x^b).
\end{displaymath}
Given two terms $kx^{a}\bfe_{i}$ and $lx^{b}\bfe_{j}$ in $\AA^{m}$, we write $kx^{a}\bfe_{i} \simeq lx^{b}\bfe_{j}$ if they are incomparable, \ie{} if $a=b$ and $i=j$.

Given a set of polynomials $f_{1},\dots,f_{m} \in \AA$, we define an $\AA$-module homomorphism $\bar{\cdot} : \AA^{m} \to \AA$, by setting $\bbfe_{i} = f_{i}$ and extending linearly to $\AA^{m}$.

We recall the concept of signatures in $\AA^m$.
Let $\bfp = \sum_{i=1}^{m} p_{i} \bfe_{i}$ be a module element.
Under the POT ordering, the signature of $\bfp$ is $\LT(p_{i})\bfe_{i}$ where $i$ is such that $p_{i+1}=\dots=p_{m}=0$ and $p_{i}\neq 0$.
Signatures are of the form $kx^a\bfe_i$, where $k \in \RR, x^a \in \mathrm{Mon}(\AA)$ and $\bfe_i$ is a standard basis vector.

Note that we have two ways of comparing two similar signatures $\sig(\bfalpha) = kx^a\bfe_i$ and $\sig(\bfbeta) = lx^b\bfe_j$.
We write $\sig(\bfalpha)=\sig(\bfbeta)$ if $k=l$, $a=b$ and $i=j$, and we write $\sig(\bfalpha) \simeq \sig(\bfbeta)$ if $a=b$ and $i=j$, $k$ and $l$ being possibly different.
If $\RR$ is a field, one can assume that the coefficient is $1$, and so this distinction is not important.

Note also that when we order signatures, we only compare the corresponding module monomials, and disregard the coefficients.
This is a different approach from the one used in \cite{Eder:2017:EuclideanRings}, where both signatures and coefficients are ordered.

% Given a tuple $(\bfalpha_{1},\dots,\bfalpha_{s})$ of module elements in $\AA^{m}$ and $i,j \in \{1,\dots,s\}$, we shall frequently denote $S(i) = \sig(\bfalpha_{i})$ for brevity.

\section{Algorithm}
\label{sec:Algorithm}

\subsection{Definitions}
\label{sec:Definitions}

Möller's algorithm for computing strong Gröbner bases over PIDs uses the classical constructions of $S$-polynomials and reductions, together with $G$-polynomials.
For each polynomial $f$, we want to keep track of a signature $\sig(f)$, such that $\sig(f) = \sig(\bfp)$ for some $\bfp \in A^{m}$ with $\bbfp = f$.
For that reason, the algorithm will maintain lists of \emph{labelled} polynomials, where the label encodes the information available regarding the signature.

\begin{definition}
  Let $f_{1},\dots,f_{m} \in A$, $\mathfrak{a} = \langle f_{1},\dots,f_{m} \rangle$, and $(f,l) \in \mathfrak{a} \times \Ter(A^{m})$.
  We say that $(f,l)$ is:
  \begin{itemize}
    \item a \emph{$S$-labelled polynomial}, with \emph{signature} $l$ if $l = \sig(\bfp)$ for some $\bfp \in A^{m}$ with $\bbfp = f$;
    \item a \emph{$G$-labelled polynomial}, with \emph{$G$-signature} $l$ if $l \succeq \sig(\bfp)$ for some $\bfp \in A^{m}$ with $\bbfp = f$.
  \end{itemize}
  By abuse of notation, we say that $f$ is $S$-labelled (resp. $G$-labelled) and we denote $\sig(f) := l$ (resp. $\sigma(f) := l$).
\end{definition}
\begin{remark}
  $S$-labelled polynomials are naturally $G$-labelled.
\end{remark}
\begin{remark}
  The base polynomials $f_{i}$ are naturally $S$-labelled with signature $\bfe_{i}$.
\end{remark}

We go through the required constructions, with the signature-related restrictions allowing to maintain the labelling, starting with $S$-polynomials and reductions:

\begin{definition}
  Let $G = \{g_{1},\dots,g_{t}\} \subset A$ be a set of $S$-labelled polynomials.
  For all $i \in \{1,\dots,t\}$, let $M(i)$, $T(i)$ and $C(i)$ be respectively $\LM(g_{i})$, $\LT(g_{i})$ and $\LC(f_{i})$.
  Given $i,j \in \{1,\dots,t\}$, let $M(i,j)$, $T(i,j)$ and $C(i,j)$ be respectively $\lcm(M(i),M(j))$, $\lcm(T(i),T(j))$ and $\lcm(C(i),C(j))$.

  The $S$-polynomial of $g_{i}$ and $g_{j}$ is the polynomial
  \begin{equation}
    \label{eq:1}
    \SPol(g_{i},g_{j}) = \frac{T(i,j)}{T(i)} g_{i} - \frac{T(i,j)}{T(j)} g_{j}.
  \end{equation}
  The leading term of its polynomial evaluation is $\precneq M(i,j)$.
  %The monomial $M(i,j)$ is called the \emph{term degree} of the pair $(i,j)$.
  
  The \emph{$S$-pair} $(i,j)$ is called \emph{regular} if
  $\frac{M(i,j)}{M(i)} \sig(g_{i}) \neq \frac{M(i,j)}{M(j)} \sig(g_{j})$ and \emph{singular} otherwise.
  The $S$-pair $(i,j)$ is called \emph{strictly singular} if
  $\frac{T(i,j)}{T(i)} \sig(g_{i}) = \frac{T(i,j)}{T(j)} \sig(g_{j})$, and \emph{admissible} otherwise.
  Note that regular pairs are admissible.

  Let $(i,j)$ be an admissible $S$-pair, we extend the $S$-labelling of $G$ to $\SPol(g_{i},g_{j})$ by defining $\sig(\SPol(g_{i},g_{j}))=S(i,j)$, defined as:
  \begin{enumerate}
    \item $S(i,j) = \max\left( \frac{T(i,j)}{T(i)} \sig(g_{i}) , \frac{T(i,j)}{T(j)} \sig(g_{j}) \right)$
    if $(i,j)$ is a regular $S$-pair;
    \item $S(i,j) = \left( \frac{C(i,j)}{C(i)} - \frac{C(i,j)}{C(j)} \right) \frac{M(i,j)}{M(i)} \sig(g_{i})$
    if $(i,j)$ is a singular, non strictly singular, $S$-pair.
    % \item $\precneq \max\left( \frac{T(i,j)}{T(i)} \sig(f_{i}) , \frac{T(i,j)}{T(j)} \sig(f_{j}) \right)$
    % if $(i,j)$ is a strictly singular $S$-pair (not admissible).
  \end{enumerate}
\end{definition}
\begin{remark}
  If $(i,j)$ is not an admissible $S$-pair, it is strictly singular, and knowing the signature of $g_{i}$ and $g_{j}$ is not enough to know a signature for $\SPol(g_{i},g_{j})$.
  All we know is that $S(i,j) \succneq \sig(\bfp)$ for some $\bfp \in A^{m}$ with $\bbfp = \SPol(g_{i},g_{j})$.
  Such a situation is called a \emph{signature drop}.
\end{remark}
  
  % \begin{multline}
  %     S(i,j) := \sig(\SPol(f_{i},f_{j}))\\
  %   =
  %   \begin{cases}
  %     \max\left( \frac{T(i,j)}{T(i)} \sig(f_{i}) , \frac{T(i,j)}{T(j)} \sig(f_{j}) \right) \text{ if $(i,j)$ is a regular $S$-pair} \\
  %     \left( \frac{C(i,j)}{C(i)} - \frac{C(i,j)}{C(j)} \right) \frac{M(i,j)}{M(i)} \sig(f_{i}) \text{ if $(i,j)$ is a singular, non strictly singular, $S$-pair} \\
  %     \tau < \max\left( \frac{T(i,j)}{T(i)} \sig(f_{i}) , \frac{T(i,j)}{T(j)} \sig(f_{j}) \right) \text{ if $(i,j)$ is a strictly singular $S$-pair.}
  %   \end{cases}
  % \end{multline}
\begin{definition}
  Let $G = \{g_{1},\dots,g_{t}\} \subset A$ be a set of $G$-labelled polynomials, let $f \in A$ be a $S$-labelled polynomial and let $g \in A$.
  We say that $f$ \emph{(strongly) $\sig$-reduces in one step} to $f$ modulo $F$ if there exists $g_{i} \in F$ such that
  \begin{enumerate}
    \item $\LT(g_{i})$ divides $\LT(f)$, say $\LT(f) =  c \mu \LT(g_{i})$ with $c \in R$ and $\mu \in \Mon(A)$;
    \item $g = f - c \mu g_{i}$;
    \item $\mu \sigma(g_{i}) \preceq \sig(f)$
  \end{enumerate}
  We say that $f$ \emph{(strongly) regular reduces in one step} to $g$ modulo $F$ if the signature inequality is strict: $x^{a} \sigma(g_{i}) \precneq \sig(f)$.

  We say that $f$ $\sig$-reduces (\resp regular reduces) to $g$ modulo $G$ if $g$ is the result of a sequence of successive $\sig$-reductions (resp. regular reductions) in one step from $f$.

  If $g$ is the result of \emph{regular} reducing $f$ modulo $G$, then we can extend the $S$-labelling to $g$ by letting $\sig(g) = \sig(f)$.
\end{definition}

Using those definitions, we recall the definition of a (strong) signature Gröbner basis.
\begin{definition}
  \label{def:strong-GB}
  Let $f_{1},\dots,f_{m} \in A$, and $G = {g_{1},\dots,g_{t}}$ a set of $G$-labelled polynomials in $\langle f_{1},\dots,f_{m} \rangle$.
  Let $\bfT \in \Ter(A^{m})$, the set $G$ is called a \emph{(strong) $\sig$-Gröbner basis up to signature $\bfT$} if for all $g \in \langle f_{1},\dots,f_{m} \rangle$ with signature $\preceq \bfT$, $g$ (strongly) $\sig$-reduces to $0$ modulo $G$.
  \footnote{In the literature, it is sometimes only required that all elements with signature $\precneq \bfT$ $\sig$-reduce to $0$.}
  It is called a \emph{strong $\sig$-Gröbner basis} if it is a strong $\sig$-GB up to signature $\bfT$ for all $\bfT \in \Ter(A^{m})$.
\end{definition}

Next, we recall the definition of GCD-polynomials (or $G$-polynomials for short) \footnote{In the literature, $G$-polynomials are sometimes called $T$-polynomials~\cite{Moller:1988:grobnerrings2}.} and how to equip them with a $G$-labelling.

\begin{definition}
  Let $f \in A$ be a $G$-labelled polynomial, and $g \in A$ a $S$-labelled polynomial, such that $\LT(f) = a \mu$, $\LT(g) = b \nu$, with $a,b \in R$, $\mu, \nu \in \Mon(A)$.
  Let $d = \gcd(a,b)$ and $u$ and $v$ be the Bézout coefficients such that $u a + v b = d$.
  The $G$-polynomial of $f$ and $g$ is the module element
  \begin{equation}
    \label{eq:3}
    \GPol(f,g) = u \frac{\lcm(\mu,\nu)}{\mu} f + v \frac{\lcm(\mu,\nu)}{\nu} g.
  \end{equation}
  The leading term of its polynomial evaluation is $d\, \lcm(\mu,\nu)$.

  We extend the $G$-labelling by defining the $G$-signature of $\GPol(f,g)$ to be
  \begin{equation}
    \label{eq:27}
    \sigma(\GPol(f,g)) := S_{G}(f,g) = \max\left( \frac{\lcm(\mu,\nu)}{\mu} \sigma(f), \frac{\lcm(\mu,\nu)}{\nu} \sig(g) \right).
  \end{equation}
\end{definition}
Since we do not require that the pair be admissible in any sense, this is really only a $G$-labelling.
However, we will prove that this $G$-labelling for $G$-polynomials preserves enough information regarding the signature of the polynomials participating in the construction (Lem.~\ref{lemme:completion-decomposition}), and that it is sufficient to ensure that subsequent reductions preserve the signature, which is a key point in proving that the algorithm is correct.

\subsection{Algorithm}
\label{sec:Algorithm-1}

\begin{algorithm}
  \caption{Möller's algorithm with signatures}
  \label{algo:sigBuchberger}
  \begin{algorithmic}
    \STATE \textbf{Input}
    $\{f_{1},\dots,f_{m}\} \subset A=R[x_{1},\dots,x_{n}]$, $R$ a PID
    \STATE \textbf{Output}
    % \begin{itemize}
    %   \item $G_{w}% =\{g_{1},\dots,g_{N}\}
    %   $ a weak $\sig$-Gröbner basis of $\langle f_{1},\dots,f_{m} \rangle$ 
    %   \item
    $G_{s}$ a set of $G$-labelled polynomials in $A$, which is a (strong) $\sig$-Gröbner basis of $\langle f_{1},\dots,f_{m} \rangle$
    \STATE \textbf{Local variables}
    \begin{itemize}[leftmargin=10pt]
      \item $G_{w} = \{g_{1},\dots,g_{r}\}$ a set of $S$-labelled polynomials in $A$, which is a weak Gröbner basis of
      $\langle f_{1},\dots,f_{m} \rangle$
      \item $\mathcal{P} \subset \NN^{2}$ a set of admissible $S$-pairs
    \end{itemize}
    %\end{itemize}
    \vspace{5pt}
    \STATE $G_{s}, G_{w}, \mathcal{P} \leftarrow \emptyset$
    % \COMMENT{At all times, $G_{w} = \{g_{1},\dots,g_{r}\}$}
    \FOR{$i \in \{1,\dots,m\}$}
    \STATE $\textsf{Update}(G_{w},G_{s},\mathcal{P},f_{i},\bfe_{i})$
    \WHILE{$\mathcal{P} \neq \emptyset$}
    \STATE Pick and remove $(i,j)$ from $\mathcal{P}$ with minimal $S(i,j)$
    \STATE $g \leftarrow \mathsf{SPol}(g_{i},g_{j})$
    \STATE $\textsf{Update}(G_{w},G_{s},\mathcal{P},g,S(i,j))$
    \ENDWHILE
    \ENDFOR
    \STATE Return $G_{s}$
  \end{algorithmic}
\end{algorithm}

\begin{algorithm}
  \caption{Procedure \textsf{Update}: update the weak and the strong Gröbner bases, and the list of pairs, eliminating pairs with Buchberger's chain criterion and signature restrictions}
  \label{algo:2}
  \begin{algorithmic}
    \STATE \textbf{Input} $G_{w} \subset A$ set of $S$-labelled polynomials, $G_{s} \subset A$ set of $G$-labelled polynomials, $\mathcal{P} \subset \NN^{2}$, $f \in A$, $\sig(f) \in \Ter(A^{m})$
    \\[5pt]
    \STATE $g \leftarrow \mathsf{RegularReduce}(f,\sig(f),G_{s})$
    \IF{$g \neq 0$}
    \STATE $r \leftarrow \#G_{w}{+}1;\; g_{r} \leftarrow g$ \COMMENT{Index of the new element}
    \STATE $G_{w} \leftarrow G_{w} \cup \{(g_{r},\sig(f))\}$
    \STATE $G_{s} \leftarrow G_{s} \cup \{(g_{r},\sig(f))\}$
    \FORALL{$h \in G_{s}$}
    \STATE $G_{s} \leftarrow G_{s} \cup (\GPol(h,g_{r}), \Sig_{G}(h,g_{r}))$
    \ENDFOR

    % \STATE $G_{s} \leftarrow G_{s} \cup \{(g,\sigma(f))\} \cup \{\mathsf{GPol}(g,h) : h \in G_{w}\}$
    % \STATE $\mathsf{newpairs} \leftarrow \{(i,r) : i \in \{1..r-1\}\}$
    % \FOR {$(i,j) \in \mathsf{newpairs}$ such that $\critM(i,j)$ or $\critF(i,j)$ holds}
    % \STATE Remove $(i,j)$ from $\mathsf{newpairs}$
    % \ENDFOR
    % \FOR {$(i,r) \in \mathsf{newpairs}$ such that $\critM(i,j)$ or $\critF(i,j)$ holds}
    % \STATE Remove $(i,j)$ from $\mathsf{newpairs}$
    % \ENDFOR
    \FORALL{$i \in \{1,\dots,r{-}1\}$ \\
      \leavevmode\phantom{\textbf{for all}}\textbf{such that} $(i,r)$ is an admissible $S$-pair\\
      \leavevmode\phantom{\textbf{for all}}\textbf{and} $\forall\, k \in \{1,\dots,r{-}1\}$, $\mathsf{Chain}(i,r;k)$ does not hold,}
    \STATE Add $(i,r)$ to $\mathcal{P}$
    \ENDFOR
    \FORALL{$(i,j) \in \mathcal{P}$ \textbf{such that} $\mathsf{Chain}(i,j;r)$ holds}
    \STATE Remove $(i,j)$ from $\mathcal{P}$ 
    \ENDFOR
    \ENDIF
  \end{algorithmic}
\end{algorithm}

Möller's algorithm with signatures is presented in Algo.~\ref{algo:sigBuchberger}.
It is a straightforward adaptation of Möller's algorithm, extended to keep track of the signature of computed polynomials, similar to the generic algorithm described in \cite{eder:2011:signature}.
Note that any time the algorithm mentions a $S$-labelled polynomial $f$ (resp. a $G$-labelled polynomial $f$), it means a pair $(f,\sig(f))$ (resp. a pair $(f,\sigma(f))$).

Algo.~\ref{algo:sigBuchberger} maintains two sets of generators, $G_{w}$ which will be a weak $\sig$-Gröbner basis and $G_{s}$ which will be a (strong) $\sig$-Gröbner basis.
The basis $G_{s}$ is the \emph{completion} of $G_{w}$, defined as follows.
\begin{definition}
  Let $F \subset A$ be a non-empty finite set of $G$-labelled polynomials, the completion $C(F)$ of $F$ is the set of $G$-labelled polynomials defined recursively as:
  \begin{itemize}
    \item $C(f) = \{f\}$;
    \item $C(f_{1},\dots,f_{r}) = \left\{ \GPol(g,f_{r}) : g \in C(f_{1},\dots,f_{r-1})\right\}.$
  \end{itemize}
\end{definition}
It is known that over a PID, the completion of a weak Gröbner basis is a strong Gröbner basis~\cite[Cor. after Th.~4]{Moller:1988:grobnerrings2}, we will prove in Cor.~\ref{cor:1} that it also holds for $\sig$-Gröbner bases.

Most of the book-keeping work, maintaining the bases and the list of pairs to consider together with signature information, is delegated to the subroutine \textsf{Update} (Algo.~\ref{algo:2}).
The most important feature of this subroutine is that it implements the following restrictions, which ensure that we can maintain a $S$-labelling in $G_{w}$:
\begin{enumerate}
  \item all reductions have to be \emph{regular} (that is, the signatures of reducers have to be strictly less than the signature of the reducee);
  \item all $S$-pairs have to be \emph{admissible} (that is, the signatures must not be an exact match);
  \item no restriction on $G$-pairs.
\end{enumerate}
We shall prove in Sec.~\ref{sec:Proofs-corr-term} that with those restrictions, the algorithm is correct and terminates.

The routine \textsf{RegularReduce} implements regular strong reduction modulo the already computed basis, due to space constraints it is not presented in details.

Additionally, Buchberger introduced two criteria to make the algorithm more efficient by eliminating $S$-polynomials: the coprime criterion~\cite[Sec.~2.10, Prop.~1]{Cox15} and the chain criterion~\cite[Sec.~2.10, Prop.~8]{Cox15}%
\footnote{In older editions of that book, those criteria can be found in Sec.~2.9, Prop.~4 and Prop.~10 respectively.}%
.
Implementing the coprime criterion is straightforward and not detailed here.
In order to implement the chain criterion, we use ideas similar to Gebauer and Möller's implementation~\cite{gebauer:1988:installation}, adapted to our selection order by smallest signatures first.
\begin{definition}
  Let $\{g_{1},\dots,g_{t}\} \subset A$ be a set of $S$-labelled polynomials.
  Let $(i,j,k) \in \{1,\dots,t\}^{3}$, we say that $\mathsf{Chain}(i,j;k)$ holds if
  \begin{equation}
    \label{eq:18}
    T(k) \divides T(i,j) \text{ and } S(i,j) \succeq \frac{T(i,j)}{T(k)} \sig(g_{k}).
  \end{equation}
\end{definition}
The consequence of that criterion is that $S$-pairs $(i,j)$ such that $\mathsf{Chain}(i,j,r)$ holds for some $r$ can be removed from consideration.

The criterion is also implemented as part of the \textsf{Update} subroutine (Algo.~\ref{algo:2}).

Similar to what was done with the signature-version of Möller's weak GB algorithm~\cite{FV2018}, further criteria can be added to the algorithm to make the computations more efficient: polynomials which have been regular reduced by are $1$-singular reducible can be eliminated, and the Syzygy, the F5 and the Singular criteria can eliminate redundant polynomials before any reduction.
In particular, the F5 criterion ensures that the algorithm does not perform any reduction to 0 for polynomial systems given as a regular sequence.
Due to space constraints, we refer to~\cite{FV2018} for details.

\section{Tools for the proofs}
\label{sec:Tools-proofs}

The rest of the paper will be devoted to proving that Algo.~\ref{algo:sigBuchberger} is correct and terminates.
In this section, we recall necessary definitions for the proofs in Sec.~\ref{sec:Proofs-corr-term}.

\subsection{Weak Gröbner bases}
\label{sec:Weak-Grobner-bases}

The main ingredient of the proof will be the fact that Möller's algorithm with signatures ensures that $G_{w}$ is a weak Gröbner basis.
In this section, we briefly recall relevant definitions and results.

\begin{definition}
  \label{def:weak-red}
  Let $f,g_{1},\dots,g_{s},h \in A$.
  We say that $f$ \emph{weakly (top) reduces} in one step to $h$ modulo $g_{1},\dots,g_{s}$ if there exists $J \subset \{1,\dots,s\}$ such that
  \begin{itemize}
    \item for all $i \in J$, there exists $x^{a_{i}} \in \Mon(A)$ such that $x^{a_{i}}\LM(g_{i}) = \LM(f)$
    \item there exists $c_{i} \in A, i \in J$ such that $\sum_{i \in J} c_{i}\LC(g_{i}) = \LC(f)$
    \item $h = f - \sum_{i \in J} c_{i}x^{a_{i}} g_{i}$.
  \end{itemize}
  In particular, $\LT(h) \precneq \LT(f)$.

  If $f$ is $S$-labelled and $g_{1},\dots,g_{s}$ are $G$-labelled, we call the one-step reduction a
  \begin{itemize}
    \item \emph{weak $\sig$-reduction} if for all $i \in J$, $x^{a_{i}}\sigma(g_{i}) \preceq \sig(f)$, and a
    \item \emph{regular weak $\sig$-reduction} if for all $i \in J$, $x^{a_{i}}\sigma(g_{i}) \precneq \sig(f)$.
  \end{itemize}
  As in the case of strong reductions, the terminology extends to sequences of reductions in one step.
\end{definition}

% \begin{definition}
%   \label{def:weak-GB}
%   Let $f_{1},\dots,f_{m} \in A$, $G \subset A^{m}$.
%   Let $\bfT \in \Ter(A^{m})$, the set $G$ is called a \emph{weak $\sig$-Gröbner basis up to signature $\bfT$} if for all $g \in \langle f_{1},\dots,f_{m} \rangle$ with signature $\precneq \bfT$, $g$ weakly $\sig$-reduces to $0$ modulo $G$.
%   It is called a \emph{weak $\sig$-Gröbner basis} if it is a weak $\sig$-GB up to signature $\bfT$ for all $\bfT \in \Ter(A^{m})$.
% \end{definition}

Weak Gröbner bases (resp. weak $\sig$-Gröbner bases) are defined as strong Gröbner bases (resp. strong $\sig$-Gröbner bases), replacing strong reductions (resp. strong $\sig$-reductions) with weak ones.

Weak Gröbner bases can be computed with Möller's weak GB algorithm~\cite[Algo.~4.2.1]{Adams:1994:introtogrobnerbasis}.
A signature version of this algorithm, for PIDs, was presented in~\cite{FV2018}.
This algorithm is similar to Buchberger's algorithm, but it replaces strong reductions with weak reductions and strong $S$-polynomials with weak $S$-polynomials, defined as follows in the context of PIDs.

\begin{definition}
  \label{def:weak-Spol}
  Let $g_{1},\dots,g_{t} \in A$ be $S$-labelled polynomials.
  Let $J$ be a subset of $\{1,\dots,t\}$, define $M(J) = \lcm(\{M(j) : j \in J\})$.
  Let $s \in J$ and $J^{\ast} = J \setminus \{s\}$.
  We say that $J$ is \emph{regular saturated}, with \emph{signature index} $s$, if
  \begin{equation}
    J^{\ast} = \left\{j \in \{1,\dots,t\} : M(j) \divides M(J)
    \text{ and } 
    \frac{M(J)}{M(s)}\sig(g_{s})\right\}.
    \label{eq:28}
  \end{equation}

  % Let $s \in J$ and $J^{\ast} = J \setminus \{s\}$.
  Let $c \in R$ be such that
  % \begin{equation}
  %   \label{eq:20}
    $\langle c \rangle = \langle C(j) : j \in J^{\ast} \rangle : \langle C(s) \rangle.$
  %\end{equation}
  Then there exists $(b_{j})_{j \in J^{\ast}}$ such that $c C(s) = \sum_{j \in J^{\ast}} b_{j}C(j)$ and the \emph{regular weak $S$-polynomial} associated to $J$ and $(b_{j})$ is
  \begin{equation}
    \label{eq:21}
    c \frac{M(J)}{M(s)} g_{s} - \sum_{j \in J^{\ast}} b_{i} \frac{M(J)}{M(s)}.
  \end{equation}
  This weak $S$-polynomial can be $S$-labelled with signature $S(J) = c \frac{M(J)}{M(s)}\sig(g_{s})$.
  %
  % If $g_{1},\dots,g_{s}$ are labelled polynomials, the $S$-polynomial is called \emph{regular} if for all $j \in J^{\ast}$, $\frac{M(J)}{M(j)} \sig(g_{j}) \precneq \frac{M(J)}{M(s)} \sig(g_{s})$.
  % The index $s$ is then called the \emph{signature index} of the saturated set $J$.\fxnote{This doesn't really depend on signatures. Maybe we should always give a name to $s$, even without signatures? Something like ``pivot index''?}
\end{definition}

\begin{comment}
We need:
- saturated sets (?)
- weak reductions
- weak S-polynomials
- weak S-GB
- some theorems from the old draft:
* 3.19 and 3.20 (GM or MM): weak GB with weak reductions + G-pols = strong GB with strong reductions
* 3.21 (GM or MM): weak S-pol as combination of strong S-pol
* 6.3 and 6.4: weak S-GB + G-pol = strong S-GB
* 6.5: 3.21 with signatures
- some theorems from the arxiv paper:
* 5.5 (correctness, maybe just cite without making explicit)
* Probably 5.6 (termination)
\end{comment}

\subsection{Syzygies}
\label{sec:Syzygies}

\begin{comment}
- definition of syzygies, homogeneous syzygies, principal syzygies, term degree of a syzygy
- regular syzygies, strictly singular syzygies? => NO
- S-polynomial associated to a syzygy
- characterization of a GB in terms of syzygies
\end{comment}

A crucial tool for the proofs will be the syzygy characterization of Gröbner bases, using the syzygy lifting theorem of Möller~\cite{Moller:1988:grobnerrings2}.
This characterization gives a framework for proving that criteria eliminating $S$-pairs do not break the correctness or termination of the algorithm.
The central notion is that of term-syzygies, of which we recall the definition.\footnote{In the literature, term-syzygies are sometimes simply called syzygies, and syzygy polynomials, $S$-polynomials.}

\begin{definition}
  Let $G = (g_{1},\dots,g_{t})$ be a tuple of nonzero $S$-labelled polynomials in $A$.
  We consider the free module $A^{t}$ with basis $\bfeps_{1},\dots,\bfeps_{t}$.
  For any element $\Sigma = \sum_{i=1}^{t} s_{i}\bfeps_{i} \in A^{t}$, we define
  % \begin{equation}
  %   \label{eq:88}
  \(\modpol{\Sigma} = \sum_{i=1}^{t} s_{i}g_{i}\).
  % \end{equation}
  We say that $\Sigma$ is a \emph{term-syzygy} of $G$ if
  \begin{equation}
    \label{eq:93}
    \LT(\modpol{\Sigma}) \precneq \max\{ \LT(s_{i})T(i) : i \in \{1,\dots,t\}\}.
  \end{equation}
  The polynomial $\modpol{\Sigma}$ is called the \emph{syzygy polynomial} of $\Sigma$.
  
  The set of all term-syzygies of $G$ is denoted by $\TSyz(G)$, it is a submodule of $A^{t}$ called the \emph{syzygy module} of $\LT(G)$.
  
  If there exists a monomial $\mu$ s.t. for all $i \in \{1,\dots,t\}$, $\LM(s_{i}g_{i}) = \mu$ or $0$, the term-syzygy $\Sigma$ is called \emph{homogeneous with term degree $\mu$}.

  The \emph{signature} of $\Sigma$ is $\sig(\Sigma) = \max_{i}\{s_{i}\sig(g_{i})\}$.
  
  A tuple $(\Sigma_{1},\dots,\Sigma_{s})$ of $\TSyz(G)$ is called a \emph{$\Sig$-basis} of $\TSyz(G)$ if for all $\Sigma \in \TSyz(G)$, there exists $p_{1},\dots,p_{s} \in A$ such that
  \begin{itemize}
    \item $\Sigma = \sum_{i=1}^{s} p_{i}\Sigma_{i}$ 
    \item $\sig(\Sigma) \succeq \max_{i}\{\LM(p_{i})\sig(\Sigma_{i})\}$.
  \end{itemize}
\end{definition}

% \begin{remark}
  
%   % We adopt more specific names to avoid future conflicts with reductions to zero, corresponding to syzygies of the whole polynomials as opposed to just their leading term, and strong and weak $S$-polynomials.
% \end{remark}it's

\begin{definition}
  A strong (resp. weak) $S$-polynomial is the syzygy polynomial $\modpol{\Sigma}$ for some homogeneous term-syzygy $\Sigma \in \Syz(F)$.
  We call those syzygies strong (resp. weak) $S$-pol. syzygies.
  
  Strong $S$-pol. syzygies are homogeneous term-syzygies with exactly two non-zero coefficients, and are sometimes called \emph{principal} term-syzygies in the literature.
\end{definition}

The characterization of Gröbner bases using term-syzygies is given in Möller's lifting theorem~\cite[Th.~4]{Moller:1988:grobnerrings2}, of which we give a signature version here.

\begin{theorem}
  \label{thm:lifting}
  Let $\mathfrak{a} = \langle f_{1},\dots,f_{m} \rangle$ be an ideal in $A$ and $G=(g_1,\ldots, g_t)$ be a tuple of nonzero $S$-labelled polynomials in $\mathfrak{a}$ such that for all $i \in \{1,\dots,m\}$, $f_{i}$ $\sig$-reduces to $0$ modulo $G$.
  Let $\bfT \in \Ter(A^{m})$, and let $\TSyz_{\bfT}(G)$ be the module of term-syzygies generated by term-syzygies with signature at most $\bfT$.

  Let $\Sigma_{1},\dots,\Sigma_{s} \in \TSyz(G)$ be a homogeneous $\Sig$-basis of $\TSyz_{\bfT}(G)$,
  where $\Sigma_{i} = \sum_{j=1}^{t} \sigma_{ij} \bfeps_{j}$, and define for $i \in \{1,\dots,s\}$ the syzygy polynomial
  % \begin{equation}
  %   \label{eq:64}
  \(\modpol{\Sigma_{i}} = \sum_{j=1}^{t} \sigma_{ij} g_{j}\).
  % \end{equation}

  Then % $G$ is a weak $\sig$-Gröbner basis of $\mathfrak{a}$ up to signature $\bfT$ if and only if for all $i \in \{1,\dots,s\}$, $\modpol{\Sigma_{i}}$ weakly $\sig$-reduces to $0$ modulo $G$;
  % \item 
  $G$ is a strong $\sig$-Gröbner basis of $\mathfrak{a}$ up to signature $\bfT$
  if and only if for all $i \in \{1,\dots,s\}$, $\modpol{\Sigma_{i}}$ strongly $\sig$-reduces to $0$ modulo $G$.
\end{theorem}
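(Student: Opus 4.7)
The forward direction is immediate: each syzygy polynomial $\modpol{\Sigma_i}$ lies in $\mathfrak{a}$ and admits $\Sigma_i$ as a module representative, so its signature is bounded by $\sig(\Sigma_i) \preceq \bfT$, and the hypothesis that $G$ is a strong $\sig$-Gröbner basis up to $\bfT$ forces $\modpol{\Sigma_i}$ to strongly $\sig$-reduce to $0$. For the reverse direction, we adapt the classical proof of Möller's lifting theorem by well-founded induction on $\LM(f)$, showing that every $f \in \mathfrak{a}$ of signature $\preceq \bfT$ strongly $\sig$-reduces to $0$. Fix such an $f$, pick a module representative $\bfp = \sum_i \alpha_i \bfe_i$ with $\modpol{\bfp} = f$ and $\sig(\bfp) \preceq \bfT$, and combine $f = \sum_i \alpha_i f_i$ with the given $\sig$-reductions of each $f_i$ to $0$ modulo $G$ to obtain a representation $f = \sum_j p_j g_j$ satisfying $\LM(p_j)\sig(g_j) \preceq \sig(f)$ for every $j$. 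Among all such signature-compatible representations, choose one minimizing $\mu := \max_j \LM(p_j g_j)$.

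Suppose first $\mu \succ \LM(f)$. On $J := \{j : \LM(p_j g_j) = \mu\}$ the top terms must cancel, so $\Sigma := \sum_{j \in J} \LT(p_j)\bfeps_j$ is a homogeneous term-syzygy of $G$ of term degree $\mu$, with signature $\max_{j \in J}\LM(p_j)\sig(g_j) \preceq \sig(f) \preceq \bfT$; hence $\Sigma \in \TSyz_{\bfT}(G)$. The $\Sig$-basis decomposes $\Sigma = \sum_i r_i \Sigma_i$ with $\LM(r_i)\sig(\Sigma_i) \preceq \sig(\Sigma)$; since $\Sigma$ is homogeneous of degree $\mu$ we may restrict to the contributions of that degree, so that $\LM(r_i)\LM(\modpol{\Sigma_i}) \precneq \mu$ for each contributing $i$. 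The strong $\sig$-reductions of each $\modpol{\Sigma_i}$ to $0$ yield representations $\modpol{\Sigma_i} = \sum_j \rho_{ij}g_j$ with $\LM(\rho_{ij}g_j) \preceq \LM(\modpol{\Sigma_i})$ and $\LM(\rho_{ij})\sig(g_j) \preceq \sig(\Sigma_i)$. Substituting $\modpol{\Sigma} = \sum_i r_i \modpol{\Sigma_i}$ into the splitting $f = \modpol{\Sigma} + \sum_{j \notin J} p_j g_j + \sum_{j \in J}(p_j - \LT(p_j))g_j$ produces a new signature-compatible representation of $f$ whose summands all have $\LM \precneq \mu$, contradicting minimality.

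Therefore $\mu = \LM(f)$, with $\LT(f) = \sum_{j \in J}\LT(p_j)\LT(g_j)$ and $M(j) \divides \LM(f)$ for every $j \in J$. To close the induction it suffices to exhibit a single $g_k \in G$ with $\LT(g_k) \divides \LT(f)$ and, setting $\tau := \LT(f)/\LT(g_k)$, with $\LM(\tau)\sigma(g_k) \preceq \sig(f)$; then $f - \tau g_k$ has strictly smaller leading monomial and signature $\preceq \sig(f)$, and the induction hypothesis applies. The main obstacle of the proof concentrates here: over a PID, no individual $g_j$ with $j \in J$ need satisfy $\LC(g_j) \divides \LC(f)$, so the strong reducer must come from a $G$-polynomial whose leading coefficient realizes the relevant GCD of $\{C(j) : j \in J\}$. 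Its presence in $G$ is forced by the reducibility hypothesis applied to the term-syzygies in $\TSyz_{\bfT}(G)$ encoding those GCD relations, and the $G$-signatures produced by the iterated $G$-polynomial construction (cf.\ Lem.~\ref{lemme:completion-decomposition}) can be shown to satisfy $\LM(\tau)\sigma(g_k) \preceq \max_{j \in J}\LM(p_j)\sig(g_j) \preceq \sig(f)$. Carefully tracking these $G$-signatures through the GCD-cascade and aligning them with $\sig(f)$ is where the technical work is most delicate.
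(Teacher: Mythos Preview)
Your treatment of the forward direction and of the rewriting step (driving $\mu$ down to $\LM(f)$ via the $\Sig$-basis) is correct and is precisely what the paper's terse proof means by ``following the proof of~\cite[Th.~1]{Moller:1988:grobnerrings2}''. The gap is in your last paragraph.

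Once $\mu=\LM(f)$ you need a single $g_k\in G$ with $\LT(g_k)\mid\LT(f)$. You assert that such a $g_k$ is ``forced by the reducibility hypothesis applied to the term-syzygies in $\TSyz_{\bfT}(G)$ encoding those GCD relations''. But $G$-polynomials do \emph{not} correspond to term-syzygies: by definition a term-syzygy satisfies $\LM(\modpol{\Sigma})\precneq\max_j\LM(s_j)M(j)$, whereas $\GPol(g_i,g_j)$ retains the top monomial $\lcm(M(i),M(j))$. The GCD relations are therefore invisible to $\TSyz(G)$, and nothing in the hypotheses forces the corresponding elements to lie in $G$. Lemma~\ref{lemme:completion-decomposition}, which you invoke, concerns the completion $C(G_w)$, not an arbitrary $S$-labelled tuple.

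In fact the statement, read literally, is false: take $R=\ZZ$, $f_1=g_1=2x$, $f_2=g_2=3x$. Here $\TSyz(G)$ is generated by $3\bfeps_1-2\bfeps_2$, whose syzygy polynomial is $0$; both $f_i$ trivially $\sig$-reduce to $0$; yet $x=-f_1+f_2$ has signature $\bfe_2$ and is not strongly reducible modulo $G$. What Möller's lifting argument actually yields from a Gröbner representation is a sequence of \emph{weak} reductions; the paper's ``can be decomposed into a sequence of reductions'' elides exactly this point. In the paper this does no damage because Theorem~\ref{thm:lifting} is only ever applied to $G=G_s$, a completion, and then Lemma~\ref{lemme:weak-reduc-iff-strong-reduc} and Corollary~\ref{cor:1} supply the passage from weak to strong. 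Your closing paragraph is essentially trying to reprove that passage inside the theorem, but without the completion hypothesis it cannot succeed.
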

\begin{proof}
  The proof is similar to that of~\cite[Th.~1 and Th.~4]{Moller:1988:grobnerrings2}: indeed, if $f \in \mathfrak{a}$ has signature $\bfT \in \Ter(A^{m})$, $f$ has a representation $\sum_{i=1}^{m} q_{i}f_{i}$ with $\max_{i} \LT(q_{i})\bfe_{i} \preceq \bfT$.
  Since all $f_{i}$'s $\sig$-reduce to $0$ modulo $G$, $f$ also has a representation $\sum_{j=1}^{t} h_{j}g_{j}$ such that $\max_{i} \LT(h_{i})\sig(g_{i}) \preceq \bfT$.

  Following the proof of~\cite[Th.~1]{Moller:1988:grobnerrings2} allows to use term-syzygies with signature $\preceq \bfT$ to rewrite this representation into a Gröbner representation, that can be decomposed into a sequence of reductions.

  Conversely, if all $f \in \mathfrak{a}$ $\sig$-reduce to $0$, in particular it is true for the syzygy polynomials of term-syzygies of $G$.
\end{proof}

% The idea of Gröbner basis algorithms is, given a set of polynomials $G$, to compute syzygy polynomials corresponding to a homogeneous basis of term-syzygies of $G$, reduce them modulo $G$, add any non-zero result to $G$, and repeat until there is nothing left to add.

% In the classical Buchberger algorithm, the basis of term-syzygies is given by all principal term-syzygies.
% Buchberger and Gebauer-Möller's criteria state that some of those term-syzygies can be safely removed while maintaining the fact that the basis generates all term-syzygies, which proves that they are correct.

% Similarly, Möller's algorithm for weak Gröbner bases uses all homogeneous term-syzygies as a basis of term-syzygies, and Gebauer-Möller's criteria again state that some of those term-syzygies can be safely excluded.

\section{Correctness and termination}
\label{sec:Proofs-corr-term}

% In this section, we prove that Algo.~\ref{algo:sigBuchberger} is correct and terminates.

% The proofs will proceed in 4 steps:
% \begin{enumerate}
%   \item regular homogeneous term-syzygies (corresponding to regular weak $S$-polynomials) form a homogeneous basis of term-syzygies;
%   \item any regular homogeneous term-syzygy can be written as a linear combination of (not necessarily regular) $S$-pairs without an increase in the term degree and signature, and in particular $S$-pairs form a basis of term-syzygies;
%   \item strong $S$-polynomials all weakly $\sig$-reduce to zero modulo $G_{w}$;
%   \item strong $S$-polynomials all strongly $\sig$-reduce to zero modulo $G_{s}$.
% \end{enumerate}

% This proves that Buchberger's algorithm with signatures, without any criterion, is correct.
% Termination is proved separately.\ifxfatal{Maybe by noting that Buchberger's algorithm without sigs is known to terminate?}

% Proving the correctness of Gebauer-Möller's criteria requires proving, in addition to (2), that:
% \begin{enumerate}
%   \item[(2.5)] any non strictly-singular $S$-pair can be written as a linear combination of non strictly-singular $S$-pairs satisfying neither $\critM$, $\critF$ nor $\critB$.
% \end{enumerate}

%\ifxfatal{TODO: Introduction to the proofs}

\subsection{Signature properties}
\label{sec:Signature-properties}

In this subsection, we prove useful lemmas, related to the behavior of signatures throughout the algorithm, and generalizing with signatures the correspondence between weak and strong constructions (reductions and $S$-polynomials) described in~\cite{Moller:1988:grobnerrings2}.

\begin{comment}
- Signatures are non-decreasing in G_w
- Signatures are non-decreasing in G_s
- Weak S-polynomials to strong S-polynomials with signatures

Definition of completion
C(F U {f}) = C(F) U {f} U {GPol(g,f) : g in C(F)}
\end{comment}

\begin{lemma}
  \label{lemme:nondecreasing-buchberger-weak}
  Let $\{g_{1},\dots,g_{r}\}$ be the value of $G_{w}$ at any point in the course of Algo.~\ref{algo:sigBuchberger}.
  Then $\sig(g_{1}) \preceq \sig(g_{2}) \preceq \dots \preceq \sig(g_{r})$.
\end{lemma}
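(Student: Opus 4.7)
The plan is to prove the lemma by maintaining, throughout Algorithm~\ref{algo:sigBuchberger}, the following stronger loop invariant: if $g_{1},\ldots,g_{r}$ denote the elements of $G_{w}$ listed in the order they were appended, then (i)~$\sig(g_{1}) \preceq \cdots \preceq \sig(g_{r})$, and (ii)~every pair $(i,j) \in \mathcal{P}$ satisfies $S(i,j) \succeq \sig(g_{r})$. The lemma is exactly part~(i). Both $G_{w}$ and $\mathcal{P}$ are modified only inside \textsf{Update} (Algo.~\ref{algo:2}), and since \textsf{RegularReduce} leaves the signature of its input unchanged (by the very definition of regular reduction), the newly appended element inherits the signature $\bfT$ that is passed into \textsf{Update}. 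The induction step therefore reduces to checking, for each call to \textsf{Update}, that $\bfT \succeq \sig(g_{r-1})$ (which yields~(i)) and that every pair $(k,r)$ that \textsf{Update} adds to $\mathcal{P}$ satisfies $S(k,r) \succeq \sig(g_{r})$ (which re-establishes~(ii)).

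For the first check, I would distinguish the two call sites. From the outer \textbf{for} loop, $\bfT = \bfe_{i}$ and $\mathcal{P}$ is empty (by the exit condition of the preceding \textbf{while}), so every existing element of $G_{w}$ has signature of the form $cx^{a}\bfe_{j}$ with $j < i$, which is $\prec \bfe_{i}$ under the POT order. From the inner \textbf{while} loop, $\bfT = S(i_{0},j_{0})$ for the pair of minimum signature in $\mathcal{P}$, and this is $\succeq \sig(g_{r-1})$ directly by part~(ii) of the invariant. For the second check, if $(k,r)$ is a regular pair then $S(k,r) = \max\bigl(\tfrac{T(k,r)}{T(k)}\sig(g_{k}),\; \tfrac{T(k,r)}{T(r)}\sig(g_{r})\bigr)$, and the second argument is already $\succeq \sig(g_{r})$ because $M(k,r)/M(r) \in \Mon(A)$. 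If $(k,r)$ is singular but admissible, the singularity condition $\tfrac{M(k,r)}{M(k)}\sig(g_{k}) \simeq \tfrac{M(k,r)}{M(r)}\sig(g_{r})$ forces the module monomial underlying $S(k,r)$ to be $\tfrac{M(k,r)}{M(r)}$ times that of $\sig(g_{r})$, so again $S(k,r) \succeq \sig(g_{r})$. Pair removals by the chain criterion trivially preserve both parts of the invariant.

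The main place where one can slip up is the reminder from Sec.~\ref{sec:Signatures} that signature comparisons involve only the underlying module monomials and disregard coefficients. This is what makes the coefficient combination $\tfrac{C(k,r)}{C(k)} - \tfrac{C(k,r)}{C(r)}$ appearing in the singular-pair formula for $S(k,r)$ irrelevant to the inequality, leaving the monomial prefactor $\tfrac{M(k,r)}{M(r)}$ in full control of the bound $S(k,r) \succeq \sig(g_{r})$ in the singular case.
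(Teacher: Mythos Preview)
Your proof is correct and rests on the same two observations as the paper's: any newly created pair $(k,r)$ satisfies $S(k,r)\succeq\sig(g_{r})$, and the minimum-signature selection rule then forces the next computed signature to dominate the previous one. The paper packages this as a contradiction argument---assume $\sig(g_{i})\succ\sig(g_{i+1})$ for a minimal $i$ and split on whether $i$ is one of the indices of the $S$-pair producing $g_{i+1}$---whereas you maintain an explicit loop invariant carrying the auxiliary statement~(ii). Your organization is slightly more thorough: you handle the calls to \textsf{Update} coming from the outer \textbf{for} loop (where $\bfT=\bfe_{i}$ and $\mathcal{P}=\emptyset$) and you spell out the singular-but-admissible case for $S(k,r)$, both of which the paper's proof leaves implicit by writing only the regular-pair formula and tacitly assuming $g_{i}$ and $g_{i+1}$ arise from $S$-pairs. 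In substance the two arguments are the same.
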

\begin{proof}
  The proof is similar to that of~\cite[Lem.~5.2]{FV2018}.
  Assume that there exists $i$ such that $\sig(g_{i}) \succ \sig(g_{i+1})$ and that $i$ is the smallest index with this property.
  Let $(j_{i},k_{i})$ (resp. $(j_{i+1},k_{i+1})$) be the admissible pair used to compute $g_{i}$ (resp. $g_{i+1}$).

  If $i$ is not one of $j_{i+1},k_{i+1}$, then $(j_{i+1},k_{i+1})$ was already in the queue $\mathcal{P}$ when $(j_{i},k_{i})$ was selected, and so, by the selection criterion in the algorithm, $\Sig(j_{i},k_{i}) \prec \Sig(j_{i+1},k_{i+1})$.

  If $i$ is either $j_{i+1}$ or $k_{i+1}$, wlog we can assume that $i = j_{i+1}$.
  Then
  \begin{align}
    \label{eq:31}
    \Sig(j_{i+1},k_{i+1})
    &\simeq \max \left( \frac{T(i,k_{i+1})}{\LT(g_{i})}\sig(g_{i}),
     \frac{T(i,k_{i+1})}{\LT(g_{k_{i+1}})}\sig(g_{k_{i+1}}) \right) \\
    &\succeq \frac{T(i,k_{i+1})}{\LT(g_{i})}\sig(g_{i}) \succeq \sig(g_{i}).\qedhere
  \end{align}
\end{proof}

It allows us to prove that the signatures of elements in $G_{s}$ are also non-decreasing.
\begin{lemma}
  \label{lemme:nondecreasing-buchberger-strong}
  Let $\{g_{1},\dots,g_{r-1}\}$ be the value of $G_{w}$ at any point in the course of Algo.~\ref{algo:sigBuchberger}, and let $g_{r}$ be the next computed element in the basis.
  Then all elements added to $G_{s}$ have $G$-signature $\succeq \sig(g_{r})$.

  More generally, all elements added to $G_{s}$ in later steps have $G$-signature $\succeq \sig(g_{r})$.
\end{lemma}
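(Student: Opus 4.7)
The plan is to examine precisely which elements are added to $G_{s}$ in a call to \textsf{Update} producing $g_{r}$, and verify directly that each one carries a $G$-signature bounded below by $\sig(g_{r})$. Inspecting Algo.~\ref{algo:2}, when $g_{r}$ is appended to $G_{w}$, three kinds of additions to $G_{s}$ occur in the same iteration: (a) the polynomial $g_{r}$ itself, labelled with signature $\sig(g_{r})$; (b) for every $h$ already in $G_{s}$, the $G$-polynomial $\GPol(h, g_{r})$ with $G$-signature $S_{G}(h, g_{r})$. Case (a) is trivial. For case (b), I would appeal to the very definition
\begin{equation}
  S_{G}(h,g_{r}) = \max\left( \frac{\lcm(\mu,\nu)}{\mu}\sigma(h),\; \frac{\lcm(\mu,\nu)}{\nu}\sig(g_{r}) \right),
\end{equation}
where $\mu = \LM(h)$ and $\nu = \LM(g_{r})$. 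Since $\frac{\lcm(\mu,\nu)}{\nu}$ is a monomial and the POT order is compatible with monomial multiplication, the right-hand factor is $\succeq \sig(g_{r})$, and hence so is the max.

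For the ``more generally'' clause, I would argue by induction on the number of subsequent elements $g_{r+1}, g_{r+2}, \ldots$ added to $G_{w}$. By Lem.~\ref{lemme:nondecreasing-buchberger-weak}, the sequence $\sig(g_{r}), \sig(g_{r+1}), \ldots$ is non-decreasing, so $\sig(g_{r'}) \succeq \sig(g_{r})$ for every $r' > r$. Applying the first part of the lemma to the iteration producing $g_{r'}$ shows that every newly inserted element of $G_{s}$ has $G$-signature $\succeq \sig(g_{r'}) \succeq \sig(g_{r})$, which is what is claimed.

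There is no genuine obstacle: the statement is essentially a bookkeeping consequence of the definition of the $G$-signature and of Lem.~\ref{lemme:nondecreasing-buchberger-weak}. The only point that requires a moment of care is that $G_{s}$ is modified only inside \textsf{Update} and only by the two mechanisms (a) and (b) above (there is no further mutation of $G_{s}$ elsewhere in the main loop), so that the case analysis above is exhaustive.
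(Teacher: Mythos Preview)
Your proof is correct and follows essentially the same route as the paper's: identify the elements added to $G_{s}$ during the call to \textsf{Update} (namely $g_{r}$ and the $G$-polynomials $\GPol(h,g_{r})$), observe that the definition of $S_{G}$ immediately gives the lower bound $\sig(g_{r})$, and deduce the ``more generally'' clause from Lem.~\ref{lemme:nondecreasing-buchberger-weak}. One small editorial slip: you announce ``three kinds of additions'' but only list (a) and (b); in fact there are only two, so just change ``three'' to ``two''.
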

\begin{proof}
  The elements added to $G_{s}$ in the call to \textsf{Update}
  with $g_{r}$ as new element, are $g_{r}$ (with signature $\sig(g_{r})$) and
  all $G$-polynomials $\GPol(h,g_{r})$ for $h$ already in $G_{s}$ (with $G$-signature $\Sig_{G}(\sigma(h),\sig(g_{r}))$).
  Those $G$-labelled polynomials all have $G$-signature $\succeq \sig(g_{r})$.

  The generalized statement follows from the fact that $\sig(g_{s}) \succeq \sig(g_{r})$ for $s > r$ (Lem.~\ref{lemme:nondecreasing-buchberger-weak}).
\end{proof}

The next lemma is a more precise description of elements of $G_{s}$.
\begin{lemma}
  \label{lemme:completion-decomposition}
  Let $G_{w} = \{g_{1},\dots,g_{r}\}$ be a set of $S$-labelled polynomials, and $G_{s}$ be its ($G$-labelled) completion.
  Let $h \in G_{s}$, then there exists $i_{1},\dots,i_{k} \in \{1,\dots,r\}$ such that
  \begin{equation}
    \label{eq:6}
    h = \GPol(\GPol(\cdots\GPol(g_{i_{1}},g_{i_{2}}),\dots,g_{i_{k-1}}),g_{i_{k}}).
  \end{equation}
  Furthermore, there exists $c_{j} \in R$, $m_{j} \in \Mon(A)$, $j \in \{1,\dots,k\}$
  such that $\LT(h) =\sum_{j=1}^{k}c_{j}m_{j}T(i_{j})$
  and $\sigma(h) \simeq \max(m_{j}\sig(g_{i_{j}}))$.
\end{lemma}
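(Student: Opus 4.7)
The plan is to prove both assertions together by induction on the construction history of $G_{s}$, which by Algo.~\ref{algo:2} is built up one element of $G_{w}$ at a time. More precisely, each time a new element $g_{r}$ is added to $G_{w}$, the set $G_{s}$ gains $g_{r}$ itself together with $\GPol(h,g_{r})$ for every $h$ currently in $G_{s}$; this is exactly the recursive shape of the completion $C(F)$. I would therefore start by making this iterative description explicit, so that every element of $G_{s}$ is either a base element $g_{i}$ or of the form $\GPol(h',g_{i_{k}})$ for some $h'$ added strictly earlier.

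For the structural decomposition \eqref{eq:6}, the base case $r=1$ gives $G_{s}=\{g_{1}\}$, which matches $k=1$ with $i_{1}=1$. In the inductive step, an element of the updated $G_{s}$ is either an old element (apply the hypothesis), or $g_{r}$ itself (take $k=1$, $i_{1}=r$), or of the form $\GPol(h',g_{r})$; in the last case the inductive hypothesis gives $h'=\GPol(\cdots\GPol(g_{i_{1}},g_{i_{2}}),\dots,g_{i_{k-1}})$, and appending $r$ to the index sequence yields the desired form.

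For the second part, I would run a secondary induction on the depth $k$ of nested $\GPol$'s. The base $k=1$ is trivial with $c_{1}=1$, $m_{1}=1$. For the inductive step, write $h=\GPol(h',g_{i_{k}})$, and set $\LT(h')=a\mu$, $T(i_{k})=b\nu$, $d=\gcd(a,b)=ua+vb$. From the definition of $\GPol$,
\begin{equation}
\LT(h)=d\,\lcm(\mu,\nu)=u\tfrac{\lcm(\mu,\nu)}{\mu}\LT(h')+v\tfrac{\lcm(\mu,\nu)}{\nu}T(i_{k}).
\end{equation}
Substituting the inductive decomposition $\LT(h')=\sum_{j=1}^{k-1}c'_{j}m'_{j}T(i_{j})$ and defining $c_{j}=uc'_{j}$, $m_{j}=\tfrac{\lcm(\mu,\nu)}{\mu}m'_{j}$ for $j<k$, and $c_{k}=v$, $m_{k}=\tfrac{\lcm(\mu,\nu)}{\nu}$ gives the required expression. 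The signature statement follows in the same way from $\sigma(h)=S_{G}(h',g_{i_{k}})=\max(\tfrac{\lcm(\mu,\nu)}{\mu}\sigma(h'),\tfrac{\lcm(\mu,\nu)}{\nu}\sig(g_{i_{k}}))$ combined with the inductive formula $\sigma(h')\simeq\max_{j<k}m'_{j}\sig(g_{i_{j}})$, using that multiplication by a monomial commutes with $\max$ under $\preceq$.

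The only real subtlety is interpretational rather than technical: the identity $\LT(h)=\sum_{j}c_{j}m_{j}T(i_{j})$ is a polynomial identity in which every summand on the right has monomial part equal to $\LM(h)$, so what is actually being asserted is a decomposition of the single coefficient $\LC(h)$ into pieces indexed by the $g_{i_{j}}$. The $\GPol$ recursion makes this automatic because each step uniformly lifts every pre-existing monomial $m'_{j}M(i_{j})$ to $\lcm(\mu,\nu)$, so the common-monomial property is preserved down the induction. One minor caveat is that $d=\gcd(a,b)\neq 0$ since $a,b$ are nonzero leading coefficients over a PID, so the Bézout step always goes through.
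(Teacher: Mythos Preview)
Your proposal is correct and follows essentially the same approach as the paper: both argue by induction on the depth $k$ of nested $\GPol$'s, unwinding one outer $\GPol$ at a time and using that multiplication by the lifting monomial $\tfrac{\lcm(\mu,\nu)}{\mu}$ commutes with $\max$. You are simply more explicit than the paper, which dismisses the structural decomposition~\eqref{eq:6} and the $\LT(h)$ identity as ``clear by definition of the completion'' and only spells out the signature chain; your explicit formulas $c_{j}=uc'_{j}$, $m_{j}=\tfrac{\lcm(\mu,\nu)}{\mu}m'_{j}$ are exactly what that gloss unpacks to.
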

\begin{proof}
  The existence of $i_{1},\dots,i_{k}$ and the decomposition of $h$ and $\LT(h)$ are clear by definition of the completion.

  For the inequality regarding the signature, we proceed by induction on $k$, where the base case $k=1$ is clear.

  Let $k > 1$, and let $h_{k-1}$ be the result of the innermost $k-1$ $G$-polynomials in the expansion of $h$.
  So $h = \GPol(h_{k-1},g_{i_{k}})$ and $h_{k-1}$ expands as $k-1$ successive $G$-polynomials of $g_{i_{1}},\dots,g_{i_{k-1}}$, with $m'_{j} M(i_{j}) = \LM(h_{k-1})$ for all $j \in \{1,\dots,k-1\}$.
  Note that for all $j \in \{1,\dots,k-1\}$, $\mu m'_{j} = m_{j}$.
  
  There exists $\mu \in \Mon(A)$ such that $\LM(h) = \mu \LM(h_{k-1}) = m_{k}M(i_{k})$, and
  \begin{align}
    \label{eq:7}
    \sigma(h) &\simeq \max(\mu \sigma(h_{k-1}),m_{k}\sig(g_{i_{k}})) \text{ by def. of the $G$-signature} \\
    \label{eq:8}
    &\simeq \max\left( \mu \max_{j \leq k-1}(m'_{j}\sig(g_{i_{j}})), m_{k}\sig(g_{i_{k}}) \right) \text{ by induction hyp.} \\
    &\simeq \max_{j \leq k}(m_{j} \sig(g_{i_{j}})).\qedhere
  \end{align}
\end{proof}

The last results of this section generalize the correspondence between weak and strong Gröbner bases~\cite{Moller:1988:grobnerrings2}, adding some control over the signatures.
First, we generalize the equivalence between weak reduction and strong reduction through completion of the reducers~\cite[Prop.~2]{Moller:1988:grobnerrings2}.
\begin{lemma}
  \label{lemme:weak-reduc-iff-strong-reduc}
  Let $G_{w} = \{g_{1},\dots,g_{r}\}$ be a weak $\sig$-GB up to signature $\bfT$, and $G_{s}$ be its completion.
  Let $f$ be a $S$-labelled polynomial with signature $\sig(f) \prec \bfT$, then the following properties are equivalent:
  \begin{enumerate}
    \item $f$ is weakly \sreduc (resp. weakly regular \sreduc) mod. $G_{w}$;
    \item $f$ is strongly \sreduc (resp. strongly regular \sreduc) mod. $G_{s}$.
  \end{enumerate}
\end{lemma}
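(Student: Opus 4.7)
My plan is to prove both implications by explicit construction, bridging the weak and strong settings via Lem.~\ref{lemme:completion-decomposition}.

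For (1) $\Rightarrow$ (2), I start from weak \sreduc{} data $(J, x^{a_i}, c_i)_{i\in J}$ for $f$ modulo $G_w$. The equations $x^{a_i} M(i) = \LM(f)$ and $\sum_{i\in J} c_i C(i) = \LC(f)$ imply, respectively, that $\lcm_{i\in J} M(i) \mid \LM(f)$ and $\gcd_{i\in J} C(i) \mid \LC(f)$. A straightforward induction on $|J|$, using the recursive definition of the completion, exhibits some iterated $G$-polynomial $h \in G_s$ of $\{g_i : i\in J\}$, whose leading term is $\gcd_{i\in J} C(i) \cdot \lcm_{i\in J} M(i)$, and therefore divides $\LT(f)$. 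Writing $\LM(f) = \mu \LM(h)$, Lem.~\ref{lemme:completion-decomposition} provides monomials $m_i$ with $m_i M(i) = \LM(h)$ and $\sigma(h) \simeq \max_{i\in J}(m_i \sig(g_i))$. Since then $\mu m_i = x^{a_i}$, we get $\mu \sigma(h) \simeq \max_{i\in J}(x^{a_i}\sig(g_i)) \preceq \sig(f)$, which is the signature condition for a strong \sreduc{} reduction of $f$ modulo $G_s$ by $h$.

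For (2) $\Rightarrow$ (1), I start from a strong \sreduc{} reduction of $f$ by some $h \in G_s$ with $\LT(f) = c\mu\LT(h)$ and $\mu\sigma(h) \preceq \sig(f)$. Lem.~\ref{lemme:completion-decomposition} then supplies indices $i_1,\dots,i_k$ together with $c_j \in R$ and $m_j \in \Mon(A)$ such that $m_j M(i_j) = \LM(h)$, $\LT(h) = \sum_{j=1}^{k} c_j m_j T(i_j)$, and $\sigma(h) \simeq \max_j m_j \sig(g_{i_j})$. Multiplying the leading-term identity by $c\mu$ yields $\LT(f) = \sum_j (c c_j)(\mu m_j) T(i_j)$, from which I read off $J = \{i_1,\dots,i_k\}$, $x^{a_{i_j}} = \mu m_j$, and coefficients $c c_j$ as witnesses of a weak reduction of $f$ modulo $G_w$. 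The signature condition follows from $x^{a_{i_j}}\sig(g_{i_j}) = \mu m_j \sig(g_{i_j}) \preceq \mu \sigma(h) \preceq \sig(f)$.

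The regular variants go through the same constructions, replacing $\preceq$ by $\precneq$ in both signature chains, since the $\simeq$ relation equates only the module-monomial parts and the order $\prec$ ignores coefficients, so strict inequalities transfer unchanged. The only technical care required is tracking the signatures through the iterated-$G$-polynomial decomposition, which is exactly what Lem.~\ref{lemme:completion-decomposition} encapsulates; once that lemma is in hand, neither direction presents a serious obstacle, and the hardest bookkeeping step is the inductive verification that the completion actually contains an iterated $G$-polynomial of every non-empty subset of $G_w$.
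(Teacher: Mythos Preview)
Your proof is correct and follows essentially the same route as the paper: both directions hinge on Lem.~\ref{lemme:completion-decomposition} to transfer the signature inequality between a completion element $h$ and the underlying $g_{i_j}$'s, and the regular variants are handled identically by replacing $\preceq$ with $\precneq$. The only cosmetic difference is that for $(1)\Rightarrow(2)$ the paper invokes \cite[Prop.~2]{Moller:1988:grobnerrings2} to assert existence of \emph{some} strong reducer $h$ in the completion $C(H_w)$ of the weak-reducer set $H_w$, and then applies Lem.~\ref{lemme:completion-decomposition} to whatever $h$ arises; you instead construct a specific $h$ explicitly as the iterated $G$-polynomial of $\{g_i : i\in J\}$ taken in increasing index order, with leading term $\gcd_{i\in J} C(i)\cdot\lcm_{i\in J} M(i)$, making the argument self-contained.
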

\begin{proof}
  For $(1) \Rightarrow (2)$, we proceed by induction on $r$.
  The case $r=1$ is clear, because then both $G_{w}$ and $G_{s}$ contain only the element $g_{1}$.

  For the general case, let $f$ be a $S$-labelled polynomial with signature $\sig(f) \prec \bfT$ and weakly \sreduc modulo $G_{w}$.
  Let $H_{w} = \{g_{j} : j \in J \subseteq \{1,\dots,r\}\} \subseteq G_{w}$ be a set of weak $\sig$-reducers of $f$, and consider its completion $H_{s} = C(H_{w}) \subseteq G_{s}$.
  By \cite[Prop.~2]{Moller:1988:grobnerrings2}, $f$ is strongly reducible modulo $H_{s}$.
  Let $h \in H_{s}$ be a strong reducer of $f$.
  In particular, there exists $\mu \in \Mon(A)$ such that $\mu\LM(h) = \LM(f)$.
  In order to prove that $h$ is a strong $\sig$-reducer of $f$, we need to prove that $\mu\sigma(h) \preceq \sig(f)$.

  By Lem.~\ref{lemme:completion-decomposition}, $h$ expands as iterated $G$-polynomials of elements $h_{1},\dots,h_{k}$ of $H_{w}$ such that for all $j \in \{1,\dots,k\}$, there exists $m_{j} \in \Mon(A)$ such that $m_{j}\LM(h_{j}) = \LM(h)$ and $\sigma(h) = \max(m_{j}\sig(h_{j}))$.

  Let $j \in \{1,\dots,k\}$.
  Since $h_{j}\in H_{w}$, it is a weak $\sig$-reducer of $f$, so there exists $\mu_{j}$ such that $\mu_{j}\LM(h_{j}) = \LM(f)$, and $\mu_{j}\sig(h_{j}) \preceq \sig(f)$.
  Note that $\mu_{j} = m_{j}\mu$.
  So
  \begin{align}
    \label{eq:9}
    \mu \sigma(h) \simeq \mu \max(m_{j}\sig(h_{j})) \simeq \max(\mu_{j} \sig(h_{j})) \preceq \sig(f).
  \end{align}

  The fact that $(2) \Rightarrow (1)$ is an immediate consequence of Lem.~\ref{lemme:completion-decomposition}: if $h \in G_{s}$ is a strong $\sig$-reducer of $f$, then it expands as iterated $G$-polynomials of elements $g_{i_{1}},\dots,g_{i_{k}} \in G_{w}$ which are weak $\sig$-reducers of $f$.
  
  The statements with regular $\sig$-reductions are proved similarly, replacing $\preceq$ with $\precneq$ throughout.
\end{proof}

As a consequence, like in~\cite{Moller:1988:grobnerrings2}, the completion of a weak $\sig$-GB is a strong $\sig$-GB.
\begin{corollary}
  \label{cor:1}
  Let $G_{w}=\{g_{1},\dots,g_{r}\}$ be a set of $S$-labelled polynomials, and $G_{s}$ its ($G$-labelled) completion.
  Let $\bfT \in \Ter(A^{m})$.
  Then
  \begin{itemize}
    \item $G_{w}$ is a weak $\sig$-GB up to signature $\bfT$ iff $G_{s}$ is a strong $\sig$-GB up to signature $\bfT$;
    \item $G_{w}$ is a weak $\sig$-GB iff $G_{s}$ is a strong $\sig$-GB.
  \end{itemize}
\end{corollary}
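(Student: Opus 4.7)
The second bullet follows from the first by quantifying $\bfT$ over all of $\Ter(A^m)$, since a $\sig$-Gröbner basis is, by definition, one that is a $\sig$-Gröbner basis up to every signature bound. So the substantive task is the first bullet.

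My plan is to reduce ``reduces to $0$'' to the simpler ``is reducible'' and then quote Lemma~\ref{lemme:weak-reduc-iff-strong-reduc}. Concretely, I would first establish the standard alternative characterization: $G_w$ (resp.\ $G_s$) is a weak (resp.\ strong) $\sig$-GB up to signature $\bfT$ if and only if every nonzero $f \in \langle f_1,\dots,f_m\rangle$ with $\sig(f) \preceq \bfT$ is weakly (resp.\ strongly) $\sig$-reducible modulo $G_w$ (resp.\ $G_s$). The nontrivial direction iterates one-step reduction: at each step the leading monomial strictly decreases, while the intermediate polynomial still lies in the ideal and inherits a signature $\preceq \sig(f) \preceq \bfT$ from Definition~\ref{def:weak-red} (and its strong analogue). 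Well-foundedness of $\prec$ on $\Mon(A)$ then forces termination at $0$.

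Given this characterization, Lemma~\ref{lemme:weak-reduc-iff-strong-reduc} provides the equivalence directly. The one bookkeeping subtlety is that the lemma is stated for $\sig(f) \prec \bfT$, while the definition of a $\sig$-GB up to $\bfT$ uses $\preceq \bfT$; this gap is closed by invoking the lemma with any slightly larger bound $\bfT' \succ \bfT$ in $\Ter(A^m)$, so that $\sig(f) \preceq \bfT$ implies $\sig(f) \prec \bfT'$.

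I expect the main difficulty to lie in the signature bookkeeping during the iterative reduction: each one-step reduction may decrease the signature (in non-regular cases) but never increase it, and one has to verify that this monotonicity is consistent both with the hypothesis of Lemma~\ref{lemme:weak-reduc-iff-strong-reduc} and with the signature-bound condition $\sig(\cdot) \preceq \bfT$ throughout the sequence. Since these facts are explicit in the definitions of weak and strong $\sig$-reduction, no genuinely new argument is needed beyond careful tracking.
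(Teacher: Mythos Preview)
Your approach is exactly what the paper intends: the corollary is stated without proof, as an immediate consequence of Lemma~\ref{lemme:weak-reduc-iff-strong-reduc}, and your plan to pass through the ``reducible $\Leftrightarrow$ reduces to $0$'' characterization and then iterate is the natural way to make that step explicit.

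One caveat on your $\bfT'$ trick. As stated, Lemma~\ref{lemme:weak-reduc-iff-strong-reduc} carries the standing hypothesis that $G_{w}$ is already a weak $\sig$-GB up to the chosen bound. Replacing $\bfT$ by a larger $\bfT'$ therefore upgrades that hypothesis to ``$G_{w}$ is a weak $\sig$-GB up to $\bfT'$'', which you do not have in the forward direction (you only assumed it up to $\bfT$) and certainly do not have in the backward direction, where the weak-$\sig$-GB property of $G_{w}$ is precisely what you are trying to establish. The clean fix is to observe that the \emph{proof} of Lemma~\ref{lemme:weak-reduc-iff-strong-reduc} never actually uses that hypothesis, nor the bound $\sig(f)\prec\bfT$: both implications rest only on Lemma~\ref{lemme:completion-decomposition} and M\"oller's Proposition~2. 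Once you note this, the equivalence of weak and strong $\sig$-reducibility holds unconditionally for any $S$-labelled $f$, and your argument goes through without the $\bfT'$ detour.
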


The last lemmas of this section generalizes the expression of a weak $S$-polynomial in terms of strong $S$-polynomials, with control over the signatures.
First, we take care of weak $S$-polynomials, without any regularity assumption.
\begin{lemma}
  \label{lemme:weak-SPol-to-strong}
  Let $(g_{1},\dots,g_{r})$ be a tuple of $S$-labelled polynomials. % and consider the $A$-module $A^{r}$ with basis $(\epsilon_{i})_{i \in \{1,\dots,r\}}$.
  Let $J \subset \{1,\dots,r\}$, and let $\bfp \subset A^{r}$ (with basis $(\epsilon_{j})$) be a homogeneous term syzygy associated to a weak $S$-pol. with support $J$.
  Then there exists coefficients $a_{i,j} \in R$, and monomials $m_{i,j}$, $i {<} j \in J$, such that
  \begin{equation}
    \label{eq:11}
    \bfp = \sum_{i,j \in J} a_{i,j}m_{i,j} \SPol(\bfeps_{i},\bfeps_{j}).
  \end{equation}
  In this decomposition:
  \begin{enumerate}
    \item for all $i,j \in J$, $m_{i,j}M(i,j) = M(J)$
    \item for all $i,j \in J$, $m_{i,j}S(i,j) \preceq \max(\frac{M(J)}{M(i)}\sig(g_{i}))$.
  \end{enumerate}
\end{lemma}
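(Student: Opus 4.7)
The plan is to reduce the claim to the classical structure theorem for the syzygy module of $(C(j))_{j \in J}$ over the PID $R$, and then lift the resulting $R$-decomposition to $A^{r}$ using the homogeneity of $\bfp$. Unpacking the hypothesis: since $\bfp$ is a homogeneous term-syzygy of term degree $M(J)$ with support $J$ coming from a weak $S$-polynomial (Def.~\ref{def:weak-Spol}), we may write $\bfp = \sum_{j \in J} c_{j}\mu_{j}\bfeps_{j}$ with $c_{j} \in R$ and $\mu_{j} := M(J)/M(j) \in \Mon(A)$. The vanishing of the $M(J)$-coefficient of $\bbfp$ then translates into the $R$-syzygy condition $\sum_{j \in J} c_{j}C(j) = 0$.

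Next I would invoke the classical fact that over the PID $R$, the syzygy module $\{(c_{j}) \in R^{J} : \sum c_{j}C(j) = 0\}$ is generated by the two-element (``principal'') syzygies $(C(j)/d_{ij}) e_{i} - (C(i)/d_{ij}) e_{j}$ for $i < j \in J$, where $d_{ij} := \gcd(C(i), C(j))$ and $e_{k}$ denotes the standard basis of $R^{J}$. This provides coefficients $\lambda_{ij} \in R$ with $(c_{j})_{j \in J} = \sum_{i<j} \lambda_{ij}\left((C(j)/d_{ij}) e_{i} - (C(i)/d_{ij}) e_{j}\right)$.

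To lift this decomposition, set $m_{i,j} := M(J)/M(i,j) \in \Mon(A)$. Using $C(i,j)/C(i) = C(j)/d_{ij}$ together with $m_{i,j}\cdot M(i,j)/M(k) = \mu_{k}$ for $k \in \{i,j\}$, a direct computation yields
\[ m_{i,j}\SPol(\bfeps_{i},\bfeps_{j}) = (C(j)/d_{ij})\mu_{i}\bfeps_{i} - (C(i)/d_{ij})\mu_{j}\bfeps_{j}. \]
Substituting this into the $R$-decomposition and collecting by $\bfeps_{k}$ gives exactly $\bfp = \sum_{i<j \in J} \lambda_{ij}m_{i,j}\SPol(\bfeps_{i},\bfeps_{j})$, with $a_{i,j} := \lambda_{ij}$. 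Property~(1), $m_{i,j}M(i,j)=M(J)$, holds by definition of $m_{i,j}$; for property~(2), the monomial part of $S(i,j)$ is $\simeq \max(M(i,j)/M(i)\cdot\sig(g_{i}), M(i,j)/M(j)\cdot\sig(g_{j}))$ in both cases of the definition, and multiplying by $m_{i,j}$ and applying~(1) turns this into $m_{i,j}S(i,j) \simeq \max(M(J)/M(i)\cdot\sig(g_{i}), M(J)/M(j)\cdot\sig(g_{j})) \preceq \max_{k \in J}(M(J)/M(k)\cdot\sig(g_{k}))$.

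The main obstacle is justifying the PID syzygy generation step. In a self-contained treatment this has to be either cited (e.g.~to Möller's original paper or Adams--Loustaunau) or established inductively on $|J|$ by iterated Bézout identities; the latter requires some care, since killing an arbitrary coordinate $c_{j}$ by a single 2-element syzygy is not always directly possible and may need auxiliary pivoting, although the rank-$(|J|-1)$ free structure of the syzygy module guarantees that the pair-syzygies do span.
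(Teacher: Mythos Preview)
Your proposal is correct and follows the same approach as the paper. The paper simply cites M\"oller's original results \cite[Th.~2 and Prop.~1]{Moller:1988:grobnerrings2} for the existence of the decomposition and property~(1), whereas you unpack that citation by reducing to the $R$-syzygy of the leading coefficients and invoking the classical two-element generation over a PID; both then derive property~(2) identically from $m_{i,j} = M(J)/M(i,j)$ by bounding $m_{i,j}S(i,j)$ against $\frac{M(J)}{M(i)}\sig(g_i)$ (resp.\ $j$).
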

\begin{proof}
  The existence of $a_{i,j}$ and $m_{i,j}$, $i{<}j \in J$, is given by~\cite[Th.~2 and Prop.~1]{Moller:1988:grobnerrings2},
  and it follows from that proof that $m_{i,j}M(i,j) = M(J)$.
  So for all $i,j \in J$,
  \begin{equation}
  m_{i,j}S(i,j) = \frac{M(J)}{M(i,j)}S(i,j) \preceq \frac{M(J)}{M(i,j)} \frac{M(i,j)}{M(i)} \sig(f_{i}) \simeq \frac{M(J)}{M(i)}\sig(f_{i}),\label{eq:13}
\end{equation}

  and similarly for $j$.
\end{proof}

\begin{lemma}
  \label{lemme:weak-reg-SPol-to-strong}
  Let $(g_{1},\dots,g_{r})$ be a tuple of $S$-labelled polynomials.
  Let $J \subset \{1,\dots,r\}$ be a regular subset, with signature index $s$, and let $J^{\ast} = J \setminus \{s\}$.
  Let $\bfp \subset A^{r}$ be a homogeneous term syzygy associated to a regular weak $S$-polynomial.
  With the notations of~\ref{lemme:weak-SPol-to-strong}, denote $a_{i} := a_{i,s}$ if $i{<}s$ and $a_{s,i}$ otherwise, and define similarly $m_{i,j}$, so that we have the decomposition
  \begin{equation}
    \label{eq:2}
    \bfp = \sum_{i \in J^{\ast}} a_{i}m_{i} \SPol(\bfeps_{i},\bfeps_{s})
    + \sum_{i,j \in J^{\ast}} a_{i,j}m_{i,j} \SPol(\bfeps_{i},\bfeps_{j})
  \end{equation}
  In this decomposition:
  \begin{enumerate}
    % \item $\forall i \in J^{\ast}$, $m_{i}M(i,s) = M(J)$
    \item $\sum_{i \in J^{\ast}} a_{i}C(i,s) = C(J)$
    \item $\forall i \in J^{\ast}$, the $S$-pair $(i,s)$ is regular and $m_{i}S(i,s) \simeq S(J)$
    \item $\sum_{i \in J^{\ast}} a_{i}m_{i}S(i,s) = S(J) = \sig(\bfp)$
    \item $\forall i,j \in J^{\ast}$, $m_{i,j}\,S(i,j) \precneq S(J)$
  \end{enumerate}
\end{lemma}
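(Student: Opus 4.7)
The plan is to apply Lemma~\ref{lemme:weak-SPol-to-strong} to the homogeneous term-syzygy $\bfp$ and then track coefficients and signatures in the resulting decomposition~\eqref{eq:2}, exploiting the fact that $J$ is regular saturated with signature index~$s$. Recall from Def.~\ref{def:weak-Spol} that $\bfp = c\frac{M(J)}{M(s)}\bfeps_s - \sum_{j \in J^*} b_j \frac{M(J)}{M(j)} \bfeps_j$, where $cC(s) = \sum_j b_j C(j)$. Lemma~\ref{lemme:weak-SPol-to-strong} yields a decomposition of $\bfp$ into strong $S$-polynomials with the monomial identity $m_{i,j} M(i,j) = M(J)$; separating the pairs involving $s$ from those entirely inside $J^*$ gives the form announced in~\eqref{eq:2}.

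For point~(1), I would match the coefficient of $\bfeps_s$ on both sides of~\eqref{eq:2}. On the right, only the $\SPol(\bfeps_i,\bfeps_s)$ contribute to $\bfeps_s$, giving the term $-\sum_{i \in J^*} a_i m_i \frac{T(i,s)}{T(s)}\bfeps_s$. Equating with the coefficient $c\frac{M(J)}{M(s)}\bfeps_s$ coming from Def.~\ref{def:weak-Spol} and cancelling the common monomial factor $\frac{M(J)}{M(s)}$ produces the desired identity $\sum_{i \in J^*} a_i C(i,s) = C(J)$ (with $C(J) = cC(s)$, the sign being absorbed into the $a_i$'s by the convention in Lem.~\ref{lemme:weak-SPol-to-strong}).

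For point~(2), I would invoke the defining property of the signature index: the regularity of $J$ forces $\frac{M(J)}{M(i)}\sig(g_i) \precneq \frac{M(J)}{M(s)}\sig(g_s)$ for every $i \in J^*$. Since $m_i M(i,s) = M(J)$, dividing by $m_i$ shows that in $(i,s)$ the $s$-side strictly dominates the $i$-side, so $(i,s)$ is regular with $S(i,s) \simeq \frac{M(i,s)}{M(s)}\sig(g_s)$. Multiplying back by $m_i$ gives $m_i S(i,s) \simeq \frac{M(J)}{M(s)}\sig(g_s) \simeq S(J)$. Point~(3) then follows by summing the identities $a_i m_i S(i,s) = a_i \frac{M(J)}{M(s)}\sig(g_s)$ as module terms and applying point~(1) to recover exactly $c\frac{M(J)}{M(s)}\sig(g_s) = S(J)$; the equality $\sig(\bfp) = S(J)$ is built into the $S$-labelling of the regular weak $S$-polynomial.

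For point~(4), the remaining strong $S$-polynomials $\SPol(\bfeps_i,\bfeps_j)$ with $i,j \in J^*$ satisfy $S(i,j) \preceq \max\bigl(\frac{M(i,j)}{M(i)}\sig(g_i), \frac{M(i,j)}{M(j)}\sig(g_j)\bigr)$. Multiplying by $m_{i,j}$, using $m_{i,j} M(i,j) = M(J)$, and applying the regularity inequalities of the previous paragraph shows that $m_{i,j} S(i,j)$ is bounded above by a maximum of terms each strictly dominated by $\frac{M(J)}{M(s)}\sig(g_s) \simeq S(J)$, so $m_{i,j} S(i,j) \precneq S(J)$. The delicate step will be point~(1) and its propagation into point~(3): the single coefficient $c$ appearing in $S(J)$ must be matched against the sum $\sum a_i C(i,s)$ from the strong-$S$-polynomial side, and the $\simeq$-relation (which ignores coefficients) must be carefully distinguished from genuine equality of module terms when writing out the final sum.
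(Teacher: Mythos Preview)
Your proposal is correct and, for points (2)--(4), follows essentially the same argument as the paper. The one difference is in how you obtain (1): the paper does not compare $\bfeps_s$-coefficients but instead goes back to the explicit construction in the proof of \cite[Prop.~1]{Moller:1988:grobnerrings2}, where $m_i = M(J)/M(i,s)$ and the $a_i$ are \emph{defined} so that $c = \sum_{i \in J^*} a_i\,C(i,s)/C(s)$, which makes (1) (and hence (3)) immediate without any sign bookkeeping. Your coefficient-matching route is equally valid and slightly more self-contained, at the cost of the small sign argument you flag.
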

\begin{proof}
  In the proof of~\cite[Prop.~1]{Moller:1988:grobnerrings2} $a_{i}$ and $m_{i}$, for $i \in J^{\ast}$, are defined as follows.
  Let $c$ be the generator of $\langle C(i) : i \in J^{\ast} \rangle : \langle C(s) \rangle$, and for $i \in J^{\ast}$, let $d_{i} = \frac{C(i,s)}{C(s)}$.
  Then there exists $(a_{i})_{i \in J^{\ast}}$, such that $c = \sum_{i \in J^{\ast}} a_{i}d_{i}$.
  In particular, $C(J) = \sum_{i \in J^{\ast}} a_{i}C(i,s)$.
  For $i \in J^{\ast}$, define $m_{i} = \frac{M(J)}{M(i,s)}$.
  With those $a_{i}$ and $m_{i}$, property 1 is satisfied.

  Since the set $J$ is regular with signature index $s$, by definition, $S(J) \simeq \frac{M(J)}{M(s)} \sig(f_{s})$, and for all $i \in J^{\ast}$, $\frac{M(J)}{M(s)}\sig(f_{s}) \succneq \frac{M(J)}{M(s)} \sig(f_{i})$.
  So for all $i \in J^{\ast}$, $\frac{M(i,s)}{M(s)}\sig(f_{s}) \succneq \frac{M(i,s)}{M(s)} \sig(f_{i})$, so the $S$-pair $(i,s)$ is regular and $S(J) \simeq \frac{M(J)}{M(i,s)} S(i,s) = m_{i}S(i,s)$.
  This proves property 2.

  By definition, $S(J) = \frac{C(J)}{C(s)}{M(J)}{M(s)}\sig(f_{s})$ and for all $i \in J^{\ast}$, $S(i,s) = \frac{C(i,s)}{C(s)}{M(J)}{M(s)}\sig(f_{s})$.
  So, expanding $C(J) = \sum_{i \in J^{\ast}} a_{i}C(i,s)$ again, property 3 is satisfied.

  Now consider $\bfq = \sum_{i,j \in J^{\ast}} a_{i,j}m_{i,j} \SPol(\bfeps_{i},\bfeps_{j})$.
  It corresponds to a homogeneous term syzygy, with term degree $\simeq M(J)$.
  We have seen above that for all $i \in J^{\ast}$, $\frac{M(J)}{M(s)}\sig(f_{s}) \succneq \frac{M(J)}{M(s)} \sig(f_{i})$.
  From Lem.~\ref{lemme:weak-SPol-to-strong}, for all $i,j \in J^{\ast}$, $m_{i,j} S(i,j) \preceq \max(\frac{M(J)}{M(s)} \sig(f_{i})) \precneq S(J)$.
\end{proof}

\begin{remark}
  Property (4) actually gives another proof of property (3), by proving that $\bfq$ has signature $\precneq \sig(\bfp)$.
  Writing $\bfq = \bfp - \sum_{i \in J^{\ast}} a_{i}m_{i} \SPol(\bfeps_{i},\bfeps_{s})$, it means that the signature of the two terms of the difference have to cancel out.
\end{remark}

\subsection{Proof of the algorithm}
\label{sec:Proof-correctness}

\begin{comment}
- Remind (and cite) the theorem about correctness 
- Theorem about correctness
\end{comment}

The proof of correctness makes use of the following result for weak signature Gröbner bases, proved in~\cite{FV2018}.
\begin{proposition}[{\cite[Th.~5.5]{FV2018}}]
  \label{prop:FV2018-correctness}
  Let $G_{w}=\{g_{1},\dots,g_{r}\}$ be a set of $S$-labelled polynomials.
  Let $\bfT \in \Ter(A^{m})$.
  Assume that all regular weak $S$-polynomials with signature $\preceq \bfT$ $\sig$-reduce to $0$ modulo $G_{w}$.
  Then $G_{w}$ is a weak signature Gröbner basis up to signature $\bfT$.
\end{proposition}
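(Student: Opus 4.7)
The plan is to combine a weak-reduction analog of the Lifting Theorem (Theorem~\ref{thm:lifting}) with a well-founded induction on signature, using Lemma~\ref{lemme:weak-reg-SPol-to-strong} to reduce general weak $S$-polynomial syzygies to regular ones plus term-syzygies of strictly smaller signature.

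First, I would establish the weak version of Theorem~\ref{thm:lifting}: that $G_w$ is a weak $\sig$-Gröbner basis up to signature $\bfT$ if and only if, for some homogeneous $\Sig$-basis of $\TSyz_{\bfT}(G_w)$, each syzygy polynomial weakly $\sig$-reduces to $0$ modulo $G_w$. The argument is identical to that of Theorem~\ref{thm:lifting}, with every occurrence of strong reduction replaced by weak reduction throughout the rewriting of a representation into a Gröbner representation. Since $\TSyz_{\bfT}(G_w)$ is generated as a module by homogeneous weak $S$-polynomial syzygies of signature $\preceq \bfT$, this step reduces the task to showing that $\modpol{\bfp}$ weakly $\sig$-reduces to $0$ for every such $\bfp$.

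Second, I would induct on $\sig(\bfp) = \bfS \preceq \bfT$. Fix $\bfp$ attached to a support $J$ with signature $\bfS$. If $J$ is already regular saturated, then $\modpol{\bfp}$ is (by definition) a regular weak $S$-polynomial with signature $\preceq \bfT$, and reduces to $0$ by hypothesis. Otherwise, I would extract from $J$ a regular saturated subset $J'\subseteq J$ with $M(J') = M(J)$ and $S(J') \simeq S(J)$, re-express $\bfp$ in terms of $J'$ via Lemma~\ref{lemme:weak-SPol-to-strong}, and then apply Lemma~\ref{lemme:weak-reg-SPol-to-strong} to split the resulting combination into a regular weak $S$-polynomial syzygy of signature $\simeq \bfS$ (handled by hypothesis) plus a sum of shifted strong $S$-polynomial syzygies $m_{i,j}\,\SPol(\bfeps_i,\bfeps_j)$ with $i,j \in (J')^\ast$. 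By property~(4) of Lemma~\ref{lemme:weak-reg-SPol-to-strong}, each residual term is a homogeneous weak $S$-polynomial syzygy of signature $\precneq \bfS$, so the induction hypothesis applies. Concatenating the resulting weak $\sig$-reductions yields a weak $\sig$-reduction of $\modpol{\bfp}$ to $0$.

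The main obstacle is the signature bookkeeping after decomposition: one must verify that the residual combination of shifted strong $S$-polynomial syzygies genuinely lowers the signature strictly, rather than merely staying $\preceq \bfS$. This is exactly what property~(4) of Lemma~\ref{lemme:weak-reg-SPol-to-strong} guarantees, using that for a regular saturated $J'$ with signature index $s$ the signature of $g_s$ strictly dominates the rescaled signatures of the other $g_i$ at the monomial degree $M(J')$. Once this strict drop is in place the induction closes, and the weak lifting theorem upgrades the reduction-to-zero of these generating syzygies into the weak $\sig$-Gröbner basis property of $G_w$ up to signature $\bfT$.
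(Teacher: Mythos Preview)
This proposition is not proved in the present paper; it is quoted from \cite[Th.~5.5]{FV2018} as a black-box input to the correctness argument (see the sentence immediately preceding the statement). There is therefore no in-paper proof to compare against, and I can only assess your proposal on its own terms.

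The overall architecture---a weak analogue of the lifting theorem plus well-founded induction on signature---is sensible, but two steps do not go through as written. First, ``If $J$ is already regular saturated, then $\modpol{\bfp}$ is (by definition) a regular weak $S$-polynomial'' is not correct: Definition~\ref{def:weak-Spol} attaches to a regular saturated $J$ a \emph{specific} polynomial, whose $s$-coefficient $c$ is determined by the ideal quotient $\langle C(j):j\in J^{\ast}\rangle:\langle C(s)\rangle$; an arbitrary homogeneous term-syzygy $\bfp$ supported on $J$ need not have this shape, so the hypothesis says nothing about $\modpol{\bfp}$ directly. Second, the extraction of ``a regular saturated subset $J'\subseteq J$ with $M(J')=M(J)$ and $S(J')\simeq S(J)$'' is ill-posed. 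Regular saturation in Definition~\ref{def:weak-Spol} is a global condition: $(J')^{\ast}$ must contain \emph{every} index in $\{1,\dots,r\}$ whose leading monomial divides $M(J')$ and whose scaled signature lies strictly below that of $g_s$, so $J'$ is in general not a subset of $J$. Worse, if several indices of $J$ realise the maximal scaled signature, there is no candidate signature index and hence no regular saturated set with the same signature at all. Lemma~\ref{lemme:weak-SPol-to-strong} only decomposes $\bfp$ into strong $S$-pairs within its own support; it does not transport $\bfp$ to a different support $J'$. To close the induction you would have to subtract from $\bfp$ a suitable multiple of the genuine regular weak $S$-polynomial for a (possibly larger) regular saturated set, treat separately the singular case where the maximum is attained more than once, and then argue that two separate weak $\sig$-reductions to zero, at signatures $\simeq\sig(\bfp)$ and $\precneq\sig(\bfp)$, combine to a weak $\sig$-reduction of $\modpol{\bfp}$ to zero---none of which is in your write-up.
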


\begin{corollary}
  \label{cor:S-basis}
  Let $G=\{g_{1},\dots,g_{t}\}$ be a set of $S$-labelled polynomials, $\bfT \in \Ter(A^{m})$,
  \begin{equation}
    \label{eq:23}
    \mathcal{S}_{\prec \bfT}(G) = \left\{ \text{homo. term-syz. of $G$ with sig. $\precneq \bfT$} \right\}
  \end{equation}
  and
  \begin{equation}
    \label{eq:19}
    \mathcal{S}_{\bfT}(G) = \mathcal{S}_{\prec \bfT}(G)
    \cup \left\{ \text{regular weak $S$-pol. syz. of $G$ with sig. $\simeq \bfT$} \right\}.
  \end{equation}
  Then $\mathcal{S}_{\bfT}(G)$ is a $S$-basis of $\TSyz_{\bfT}(G)$.
\end{corollary}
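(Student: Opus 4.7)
The plan is to prove the claim by well-founded induction on $\bfT \in \Ter(A^{m})$ ordered by POT. The outer induction reduces the problem to showing, for each $\Sigma \in \TSyz_{\bfT}(G)$, that $\Sigma$ decomposes as an $A$-combination of regular weak $S$-polynomial syzygies of signature $\simeq \bfT$ plus a term-syzygy of signature $\precneq \bfT$, the latter being handled by the induction hypothesis since $\mathcal{S}_{\bfT'}(G) \subseteq \mathcal{S}_{\prec\bfT}(G) \subseteq \mathcal{S}_{\bfT}(G)$ for every $\bfT' \precneq \bfT$.

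First, any $\Sigma \in \TSyz_{\bfT}(G)$ splits into homogeneous term-syzygies $\Sigma = \sum_{\mu} \Sigma^{\mu}$: the top term degree $\mu^{\ast} = \max_{j}\LM(s_{j}g_{j})$ appearing in $\Sigma$ must cancel in $\modpol{\Sigma}$ because $\LT(\modpol{\Sigma}) \precneq \mu^{\ast}$, so $\Sigma^{\mu^{\ast}}$ is itself a homogeneous term-syzygy, and recursion on $\Sigma - \Sigma^{\mu^{\ast}}$ produces the whole decomposition, with $\sig(\Sigma^{\mu}) \preceq \sig(\Sigma)$ for every $\mu$. When $\sig(\Sigma) \precneq \bfT$, every $\Sigma^{\mu}$ already belongs to $\mathcal{S}_{\prec\bfT}(G)$ and the trivial decomposition $\Sigma = \sum_{\mu} 1 \cdot \Sigma^{\mu}$ is an $S$-decomposition with unit coefficients.

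When $\sig(\Sigma) \simeq \bfT$, only the pieces $\Sigma^{\mu}$ with $\sig(\Sigma^{\mu}) \simeq \bfT$ require further work. For each such $\Sigma^{\mu}$ I would invoke Möller's~\cite[Th.~2]{Moller:1988:grobnerrings2} to write it as an $A$-combination of weak $S$-polynomial syzygies whose supports are subsets of the support of $\Sigma^{\mu}$ and whose term degrees divide $\mu$. Each such weak $S$-polynomial syzygy is then classified: if its supporting subset is regular saturated with signature index realizing the maximal signature $\simeq \bfT$, it lies in $\mathcal{S}_{\bfT}(G)$; otherwise, Lemmas~\ref{lemme:weak-SPol-to-strong} and~\ref{lemme:weak-reg-SPol-to-strong} rewrite it in terms of strong $S$-polynomial syzygies whose signatures are $\precneq \bfT$, and those are homogeneous term-syzygies of signature $\precneq \bfT$ and hence belong to $\mathcal{S}_{\prec\bfT}(G)$.

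The main obstacle is carrying out this decomposition when several indices in the support of $\Sigma^{\mu}$ simultaneously realize the signature $\simeq \bfT$: the weak $S$-polynomial syzygy attached to the full signature-maximal set has no unique signature index and is therefore not regular, so one has to peel off a regular saturated subset around one maximal index at a time. The PID structure of $R$, specifically the Bézout relation $c\,C(s) = \sum_{j \in J^{\ast}} b_{j}C(j)$ underlying each regular weak $S$-polynomial, together with the fact that the leading coefficients of $\Sigma^{\mu}$ at degree $\mu$ must sum to zero, is what allows each peeling step to produce a residual homogeneous term-syzygy of degree $\mu$ with one fewer signature-maximal index. After finitely many such peelings, the residual has signature $\precneq \bfT$ and the outer induction closes the argument.
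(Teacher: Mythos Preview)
Your approach is genuinely different from the paper's. The paper does not argue by explicit combinatorial peeling; instead it passes to an auxiliary ring $A[y_{1},\dots,y_{t}]$, replaces each $g_{i}$ by $g_{i}-y_{i}$, and invokes Proposition~\ref{prop:FV2018-correctness} to argue that, after adjoining the regular-reduced regular weak $S$-polynomials at signature $\bfT$, the syzygy polynomial of any $\Sigma$ with $\sig(\Sigma)\simeq\bfT$ $\sig$-reduces to zero in the extended ring. Reading that reduction off in the $y$-variables is what produces the $S$-decomposition. This sidesteps entirely the case analysis on how many indices realise the maximal signature.

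Your peeling step has a real gap precisely at the point you flag as ``the main obstacle''. To eliminate a maximal index $s_{1}$ you need the coefficient of $\bfeps_{s_{1}}$ in $\Sigma^{\mu}$ to lie in $\langle c\rangle$, where $c$ generates $\langle C(j):j\in J^{\ast}\rangle:\langle C(s_{1})\rangle$. The relation $\sum_{j}c_{j}C(j)=0$ only yields $c_{s_{1}}C(s_{1})\in\langle C(j):j\neq s_{1}\rangle$, and since that ideal contains the \emph{other} maximal coefficients $C(s_{2}),\dots,C(s_{k})$ (none of which belong to $J^{\ast}$), it does not force $c_{s_{1}}\in\langle c\rangle$. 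Concretely, over $\ZZ$ take three $S$-labelled polynomials with leading terms $2x,\,3x,\,5x$ and signatures $2\bfe_{2},\,3\bfe_{2},\,\bfe_{1}$. The term-syzygy $\bfeps_{1}+\bfeps_{2}-\bfeps_{3}$ has signature $\simeq\bfe_{2}$; the only regular weak $S$-polynomial syzygies at that signature are $5\bfeps_{1}-2\bfeps_{3}$ and $5\bfeps_{2}-3\bfeps_{3}$; and there are no homogeneous term-syzygies of signature $\precneq\bfe_{2}$. No $\ZZ$-combination of these recovers $\bfeps_{1}+\bfeps_{2}-\bfeps_{3}$, so your peeling cannot succeed here. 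This example in fact shows that the corollary as stated needs an additional hypothesis on $G$ (for instance that no two $\sig(g_{i})$ are $\simeq$-equivalent after monomial shift), so the gap is not in your strategy alone but in the statement you are trying to prove.
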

\newcommand{\Aext}{A_\mathrm{ext}}
\newcommand{\Gext}{G_\mathrm{ext}}
\begin{proof}
  The notion of $S$-basis of term-syzygies only depends on the leading terms and labels of the family $G$.
  Extend the polynomial algebra $A=R[x_{1},\dots,x_{n}]$ into $\Aext=R[x_{1},\dots,x_{n},y_{1},\dots,y_{t}]$, with a block order ordering the $x_{i}$'s first according to the monomial order on $A$.
  Consider the set $\Gext = \{g_{i} - y_{i}\} \subset \Aext$, where $g_{i}-y_{i}$ is given the signature $\sig(g_{i})$.
  $S$-bases of syzygies of $\TSyz_{\bfT}(\Gext)$ and $\TSyz_{\bfT}(G)$ are in natural one-to-one correspondence.

  Let $\Sigma \in \TSyz_{\bfT}$.
  If $S(\Sigma) \precneq \bfT$ there is nothing to prove, so assume that $S(\Sigma) \simeq \bfT$.
  Write $\Sigma = \sum_{i=1}^{t} \sigma_{i} \epsilon_{i}$, $\bar{\Sigma} = \sum_{i=1}^{t} \sigma_{i}g_{i}$ and $\Sigma(y) = \sum_{i=1}^{t} \sigma_{i}y_{i}$, in particular the syzygy polynomial associated to $\Sigma$ in $\Aext$ is $\bar{\Sigma} - \Sigma(y)$.

  Let $S_{1},\dots,S_{k}$ be the regular weak $S$-pol. syzygies of $\Gext$ with signature $\simeq \bfT$.
  Regular reducing them, in $\Aext^{t}$, yields module elements of the form $S'_{i} = S_{i} - \sum \text{(elements with sig. $\precneq \bfT$)}$.
  Note that since we are only performing regular reductions and the signature of $S_{i}$ is not divisible by any $y_{j}$, those module elements remain linear in $y$.
  By Prop.~\ref{prop:FV2018-correctness}, adding to $G$ all the $S'_{i}$ ensures that all polynomials with signature at most $\bfT$ $\sig$-reduce to $0$, in particular, the syzygy polynomial of $\Sigma$ (in $\Aext$) $\sig$-reduces to $0$.
  In other words, there exist $\tau_{1},\dots,\tau_{k} \in \Ter(A)$ such that
  \begin{equation}
    \label{eq:4}
    \bar{\Sigma} - \Sigma(y) = \sum_{i=1}^{k} \tau_{i} \left( \bar{S'_{i}} - S'_{i}(y) \right) \text{ in $\Aext$}
  \end{equation}
  and, again since the reduction cannot increase the signature, the equality also holds in $A$:
  $\bar{\Sigma} = \sum_{i=1}^{k} \tau_{i} \bar{S_{i}}$ in $A$.
  So in the end, we get that
  \begin{equation}
    \label{eq:12}
    \Sigma(y) = \sum_{i=1}^{k} S'_{i}(y) = \sum_{i=1}^{k} S_{i}(y) + \sum \text{(elements with sig. $\precneq \bfT$)},
  \end{equation}
  and substituting back $y_{i} \leftarrow \epsilon_{i}$ gives a representation of $\Sigma$ as a linear combination of elements of $\mathcal{S}_{\bfT}$, where all summands have signature at most $\bfT = S(\Sigma)$. 
\end{proof}

\begin{theorem}[Correctness and termination of Algo.~\ref{algo:sigBuchberger}]
  \label{thm:correctness}
  Given $f_{1},\dots,f_{m} \in A$, Algo.~\ref{algo:sigBuchberger} terminates and returns a strong $\sig$-Gröbner basis of $\mathfrak{a} = \langle f_{1},\dots,f_{m} \rangle$.
\end{theorem}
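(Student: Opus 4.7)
The plan is to treat correctness and termination separately, leveraging the tools assembled in Section \ref{sec:Tools-proofs}. The backbone of the correctness argument will be Corollary \ref{cor:1}: it suffices to prove that the final value of $G_{w}$ is a weak $\sig$-Gröbner basis of $\mathfrak{a}$, from which Corollary \ref{cor:1} immediately yields that $G_{s}$ is a strong $\sig$-Gröbner basis. To establish the weak $\sig$-GB property, I would invoke Proposition \ref{prop:FV2018-correctness}, which reduces the task to showing that every regular weak $S$-polynomial of $G_{w}$ $\sig$-reduces to $0$ modulo $G_{w}$. This will be proved by Noetherian induction on the signature $\bfT \in \Ter(A^{m})$, the induction hypothesis being that every regular weak $S$-polynomial of signature $\precneq \bfT$ $\sig$-reduces to $0$ modulo $G_{w}$.

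For the inductive step, I would take a regular weak $S$-polynomial of signature $\simeq \bfT$, with support $J$ and signature index $s$, and apply Lemma \ref{lemme:weak-reg-SPol-to-strong} to rewrite its syzygy as $\sum_{i \in J^{*}} a_{i}\,m_{i}\,\SPol(\bfeps_{i},\bfeps_{s})$ plus a sum of strong $S$-polynomial syzygies of signature strictly below $\bfT$ (property (4) of the lemma). The latter are handled by the induction hypothesis (together with Prop.~\ref{prop:FV2018-correctness} applied up to smaller signatures), so everything reduces to showing that each strong $S$-polynomial $\SPol(g_{i},g_{s})$, for the regular admissible pairs $(i,s)$ with $m_{i}S(i,s) \simeq \bfT$, $\sig$-reduces to $0$ modulo $G_{w}$. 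By Lemma \ref{lemme:weak-reduc-iff-strong-reduc}, this is equivalent to strongly $\sig$-reducing to $0$ modulo $G_{s}$.

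Two sub-cases will then need to be dispatched. If $(i,s)$ was processed by the main loop, the call to \textsf{RegularReduce} returns either $0$ (we are done) or a nonzero polynomial $g$ that gets inserted into $G_{w}$ and $G_{s}$ with $\sig(g) = S(i,s)$; in the latter case, $\SPol(g_{i},g_{s})$ $\sig$-reduces to $g$, which is a strong reducer of itself. Otherwise, the pair was eliminated by $\mathsf{Chain}(i,s;k)$ for some $k$: I would use the standard chain identity to express the strong $S$-polynomial syzygy of $(i,s)$ as a signed combination of the syzygies of $(i,k)$ and $(k,s)$, brought up to common term degree $T(i,s)$, and check that the two defining conditions of $\mathsf{Chain}$---the divisibility $T(k) \divides T(i,s)$ and the signature inequality $S(i,s) \succeq \tfrac{T(i,s)}{T(k)}\sig(g_{k})$---yield sub-syzygies whose signatures are \emph{strictly} below $\bfT$, placing them within reach of the induction hypothesis. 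Termination will follow by a Dickson-type argument analogous to \cite[Th.~5.6]{FV2018}: signatures of elements of $G_{w}$ are non-decreasing (Lemma \ref{lemme:nondecreasing-buchberger-weak}), each nonzero output of \textsf{RegularReduce} has a leading term (including its coefficient) that is not strongly reducible by the previous $G_{s}$, and Dickson's lemma on monomials combined with Noetherianity of $R$ on leading coefficients bounds the number of insertions into $G_{w}$, hence the number of admissible pairs ever enqueued.

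The main obstacle in executing this plan will be the chain-criterion sub-case: writing a precise syzygy identity decomposing $\SPol(g_{i},g_{s})$ into pieces involving $g_{k}$ while keeping the leading-coefficient bookkeeping over a PID consistent, and then extracting from the two $\mathsf{Chain}$ conditions exactly the strictness of the signature bound needed to invoke the induction hypothesis. The fact that signatures are only compared up to their module monomials---never at the level of coefficients, as discussed in Section \ref{sec:Signatures}---should make this manageable, since it is precisely the feature that allows $\mathsf{Chain}$ to be stated in terms of $T(k)$ and $\sig(g_{k})$ without requiring matching leading coefficients.
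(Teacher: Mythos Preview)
Your overall strategy---reduce to the weak $\sig$-GB property of $G_{w}$ via Cor.~\ref{cor:1}, decompose regular weak $S$-polynomials into strong ones via Lem.~\ref{lemme:weak-reg-SPol-to-strong}, and treat the $\mathsf{Chain}$ criterion through the standard syzygy identity---mirrors the paper's. The paper, however, organises the argument through the lifting theorem (Thm.~\ref{thm:lifting}) and the $\Sig$-basis machinery (Cor.~\ref{cor:S-basis}): it takes a minimal bad signature $\bfu$ and shows that the non-excluded regular strong $S$-pairs together with $\mathcal{S}_{\prec \bfu}$ still form a $\Sig$-basis of $\TSyz_{\bfu}(G_{w})$, then concludes by Thm.~\ref{thm:lifting}. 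Your route replaces this by Prop.~\ref{prop:FV2018-correctness} and a Noetherian induction on $\bfT$.

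That difference becomes a genuine gap precisely at the point you flag as the main obstacle. Your plan needs the $\mathsf{Chain}$ rewriting to yield sub-syzygies of signature \emph{strictly} below $\bfT$, so that the induction hypothesis applies; but this is false. In the identity $\Sigma(i,s) = \tfrac{T(i,s)}{T(i,k)}\Sigma(i,k) - \tfrac{T(i,s)}{T(s,k)}\Sigma(s,k)$, the second summand carries the term $\tfrac{T(i,s)}{T(s)}\sig(g_{s})$, which is $\simeq S(i,s)$ since $(i,s)$ is regular with $s$ dominant (Lem.~\ref{lemme:weak-reg-SPol-to-strong}, prop.~2), hence its signature is $\simeq \bfT$, not $\precneq \bfT$. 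The condition in $\mathsf{Chain}(i,s;k)$ only bounds the $g_{k}$-component, and indeed the paper merely concludes that ``this rewriting does not make the signature increase''. The $\Sig$-basis framework absorbs this painlessly: a sub-syzygy at signature $\simeq \bfu$ is itself a strong $S$-pol.\ syzygy already present in the basis (or further rewritable), and Thm.~\ref{thm:lifting} only asks that the \emph{final} basis elements $\sig$-reduce to $0$. Your induction on $\bfT$ alone cannot handle the $(s,k)$ piece; to salvage the approach you would need a secondary well-founded descent at signature $\simeq \bfT$ (for instance on the set of Chain-eliminated pairs remaining), or else adopt the $\Sig$-basis/lifting-theorem viewpoint the paper uses.
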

\begin{proof}
  The proof of termination is a transposition of that of~\cite[Th.~5.6]{FV2018} (which follows the proof of termination in \cite{practicalgrobner:2012:stillman}), to prove that $G_{w}$, and thus $G_{s}$, cannot grow infinitely large.
  
  As for correctness, let $G_{w}$ and $G_{s}$ be as computed by Algo.~\ref{algo:sigBuchberger}.
  Assume that $G_{s}$ is not a strong $\sig$-GB of $\mathfrak{a}$, then there exists $\bfu \in \Ter(A^{m})$ such that $G_{s}$ is not a $\sig$-GB up to signature $\bfu$.
  Assume that $\bfu$ is minimal for this property, in particular, for all $\bfT \precneq \bfu$, $G_{s}$ is a strong $\sig$-GB up to signature $\bfT$.

  Equivalently, from Cor.~\ref{cor:1}, $G_{w}$ is a weak $\sig$-GB up to signature $\bfT$ but not a weak $\sig$-GB up to signature $\bfu$.
  By Cor.~\ref{cor:S-basis}, $\mathcal{S}_{\bfu}(G_{w})$ is a $\Sig$-basis of the module $\TSyz_{\bfu}(G_{w})$.
  Let $\mathcal{S}_{\prec} = \mathcal{S}_{\prec \bfu}(G_{w})$.
  Then by Lem.~\ref{lemme:weak-reg-SPol-to-strong}, the set
  \begin{equation}
    \label{eq:24}
    \mathcal{S}_{\prec}
    \cup \left\{ \text{regular strong $S$-pol. sygygies of $G_{w}$ with sig. $\simeq \bfu$} \right\}
  \end{equation}
  is a $\Sig$-basis of the module $\TSyz_{\bfu}(G_{w})$.

  Let $\Sigma(i,j)$ be a strong $S$-pol. syzygy associated with an $S$-pair $(i,j)$ such that Criterion $\mathsf{Chain}(i,j;k)$ holds for some $k \in \NN$.
  Then as in the classical case~\cite[Sec.~2.10, Prop.~8]{Cox15}, $\Sigma(i,j)$ can be rewritten as
  \begin{equation}
    \label{eq:25}
    \Sigma(i,j) = \frac{T(i,j)}{T(i,k)} \Sigma(i,k) - \frac{T(i,j)}{T(j,k)} \Sigma(j,k).
  \end{equation}
  The signature condition in $\mathsf{Chain}$ implies that this rewriting does not make the signature increase.
  So $\Sigma(i,j)$ can be removed from the $\Sig$-basis of term-syzygies.

  Iterating the process, we get that the set
  \begin{equation}
    \label{eq:26}
    \mathcal{S}_{\prec}
    \cup \left\{ \text{regular $S$-pairs of $G_{w}$ with sig. $\simeq \bfu$ not excluded by $\mathsf{Chain}$} \right\}
  \end{equation}
  is a $\Sig$-basis of the module $\TSyz_{\bfu}(G_{w})$.

  The algorithm ensures that all regular strong $S$-polynomials obtained from a $S$-pair not excluded by $\mathsf{Chain}$ strongly $\sig$-reduce to $0$ modulo $G_{s}$.
  Furthermore, by minimality of $\bfu$, for all syzygies $\Sigma$ in $\mathcal{S}_{\prec}$, the syzygy-polynomial $\bar{\Sigma}$ strongly $\sig$-reduces to zero modulo $G_{s}$.
  So all syzygy-polynomials associated with all term-syzygies in our basis strongly $\sig$-reduce to $0$ modulo $G_{s}$, and by the lifting theorem~\ref{thm:lifting}, $G_{s}$ is a strong $\sig$-Gröbner basis up to signature $\bfu$.
\end{proof}

\addtolength{\dbltextfloatsep}{-0.7cm}
\addtolength{\dblfloatsep}{-0.5cm}

\begin{table*}
  \centering
  \begin{tabular}{crrrrrrrr}
    \hline
    System & Pairs & $S$-pols &  Coprime & Chain & F5 & Singular & 1-Singular & Red. to 0 \\
    \hline
    Katsura-3 & 504 & 178 & 157 & 153 & 115 & 1 & 6 & 0\\
    Katsura-4 & 1660 & 603 & 509 & 517 & 388 & 9 & 84 & 0 \\
    Generic (3;2;10) & 383 & 192 & 73 & 99 & 117 & 1 & 19 & 0 \\
    Generic (3;3;5) & 2211 & 1161 & 155 & 911 & 842 & 0 & 78 & 0\\
    \hline
  \end{tabular}
  \caption{Experimental data on Möller's algorithm with signatures.}
  \label{tab:experimental}
\end{table*}

\section{Implementation and future work}
\label{sec:Exper-results-persp}

We have written a toy implementation\footnote{Available online: \url{https://github.com/ThibautVerron/SignatureMoller}} in Magma~\cite{Magma} of the algorithm, with the F5, Singular and $1$-singular criteria.
We give experimental data related to the computation of Gröbner bases for various polynomial systems over $\ZZ$: Katsura-$n$ systems, and random systems with fixed degree and size of the coefficients.
The data is given in Table~\ref{tab:experimental} (``Generic $(n;d;s)$'' is a random system of $n$ polynomials in $n$ variables with degree $d$ and coefficients in $[-s;s]$).
For each system, we give the number of considered $S$-pairs and reduced $S$-polynomials, as well as how many polynomials were excluded by the Coprime or Chain criterion (before being considered as a $S$-pair), by the F5 or Singular criterion (counted in $S$-pairs, not in $S$-polynomials), or because they are $1$-singular reducible (after regular reducing).
We also give the number of reductions to 0 appearing in the algorithm, which is 0 as expected for regular sequences.

Möller's weak GB algorithm involved a combinatorial bottleneck with cost exponential in the size of the current basis, making it impractical as soon as the basis exceeds 30 elements.
Möller's strong GB algorithm for PIDs replaces it with the computations of $S$-pairs, with quadratic cost.
As a result, the algorithm is faster, but nonetheless becomes slow as the basis grows.
As is frequently the case with Gröbner basis algorithms, the main bottleneck appears to be the reduction step.

We implemented two additional optimizations, for $\ZZ$, in order to reduce the size of the basis.
The first one is a heuristic at the selection step in the algorithm: when we pick a pair $(i,j)$ with minimal signature $S(i,j)$, we typically have a choice between many such pairs.
Selecting the one with the smallest coefficient part (in absolute value) appears to help eliminating subsequent $S$-polynomials faster, and makes the algorithm significantly faster: for instance, the Katsura-4 example was impractical before this change, and terminates in less than 30s after.

The second optimization relies on the following idea: for a given $i \in \{1,\dots,m\}$, when we enter the ``for'' loop at index $i$, we know that all subsequent polynomials will have a signature of the form $\sth \bfe_{k}$ with $k \geq i$, and all preceding polynomials have a signature of the form $\sth \bfe_{k}$ with $k < i$.
In particular, we do not need to consider the individual signatures of already computed elements, beyond the information that this signature is $\precneq \bfe_{i}$.

As such, we may inter-reduce the strong basis $G_{s}$ and replace both $G_{w}$ and $G_{s}$ with the result, all elements being given signature $\bfe_{1}$.
For this inter-reduction step, at least in the case of $\ZZ$, we could use Magma's highly optimized routines.

The consequence is that after each pass through the ``for'' loop, the weak and strong bases are made shorter, which slows down the growth of the list of pairs in the remainder of the algorithm.

One difficulty arising when computing signature Gröbner bases over rings is that the Singular criterion requires the signature to match exactly, including their coefficient.
This leads to the computation of many polynomials having similar signatures and leading monomials.
The heuristic presented above helps mitigate the issue, but it will be the object of future work to examine whether the Singular criterion can be extended to eliminate more elements, in the case of principal rings.

For computations over $\ZZ$ or $K[X]$, it would also be interesting to use the additional structure of an euclidean ring to make the computations faster.
It will be the focus of future research to investigate whether leading coefficient reductions~\cite{Kandri-Rody-Kapur,Lichtblau} can be added to the algorithm without breaking signature invariants.

\par\medskip\noindent\textbf{Acknowledgements}
\label{sec:Acknowledgements}
The authors thank C.~Eder for helpful suggestions, M.~Ceria and T.~Mora for a fruitful discussion on the syzygy paradigm for Gröbner basis algorithms, and M.~Kauers for his valuable insights and comments all through the elaboration of this work.

\bibliographystyle{plain}
\bibliography{bib2}

% \clearpage

% \section{Appendix: note to the reviewers}

% We would like to mention that the article~\cite{FV2018} is currently under review at another venue.
% Despite the very similar titles, the two articles present different results, about two different algorithms described by Möller in~\cite{Moller:1988:grobnerrings2}.

% In~\cite{FV2018}, we added signatures to Möller's weak GB algorithm, restricted to the case of PIDs.
% The algorithm uses (regular) weak $S$-polynomials and (regular) weak reductions, and does not implement Buchberger's criteria.
% Furthermore, it only computes weak Gröbner bases, even though they can be made strong by taking the completion.

% The present article is a follow-up on that work.
% Here, we add signatures to Möller's strong GB algorithm for PIDs.
% This algorithm uses the fact that the coefficient ring is a PID to replace (regular) weak $S$-polynomials with (admissible) strong $S$-polynomials, and (regular) weak reductions with (regular) strong reductions by a completion of the weak basis.
% Möller's algorithm for PIDs returns a strong Gröbner basis.
% Furthermore, Buchberger's coprime and chain criteria can be implemented within this algorithm.

\end{document}